\newtheorem{theorem}{Theorem}
\newtheorem{proposition}[theorem]{Proposition}
\theoremstyle{definition}
\newtheorem{definition}[theorem]{Definition}
\newtheorem{assumption}[theorem]{Assumption}
\newtheorem*{remark*}{Remark}
\title{
		\vspace{-1in} 	
		%\usefont{OT1}{bch}{b}{n}
		\normalfont \normalsize \textsc{} \\ [0pt]
		\huge Entropy Production in  Random Billiards
}
\author{\normalfont \large 
  T. Chumley\footnote{Department of Mathematics and Statistics, Mount Holyoke College, 50 College St, South Hadley, MA 01075}, 
\ R. Feres\footnote{Department of Mathematics and Statistics, Washington University, Campus Box 1146, St. Louis, MO 63130}
}
\begin{document}

%\date{}

\maketitle

\begin{abstract}
\begin{center}
Abstract 
\end{center}

\medskip

{\small We consider a class of random mechanical systems called {\em random billiards} to study the problem of quantifying the irreversibility of nonequilibrium macroscopic systems.  In a random billiard model, a point particle evolves by free motion through the interior of a spatial domain, and reflects according to a reflection operator, specified in the model by a Markov transition kernel, upon collision with the boundary of the domain.  We derive a formula for entropy production rate that applies to a general class of random billiard systems.  This formula establishes a relation between the purely mathematical concept of entropy production rate and textbook thermodynamic entropy, recovering in particular Clausius' formulation of the second law of thermodynamics.  We also study an explicit class of examples whose reflection operator, referred to as the Maxwell-Smoluchowski thermostat, models systems with boundary thermostats kept at possibly different temperatures.  We prove that, under certain mild regularity conditions, the class of models are uniformly ergodic Markov chains and derive formulas for the stationary distribution and entropy production rate in terms of geometric and thermodynamic parameters.}

\end{abstract}

\section{Introduction}

\emph{Model and results.} The overarching focus of this paper is to study the dynamics and thermodynamic properties of a class of random mechanical systems, referred to as \emph{random billiards}, that serve as a concrete model for the rigorous, analytic study of nonequilibrium phenomena.  Random billiards can be seen as a Markov chain model variation on mathematical billiards with a random reflection in place of the classical law of specular reflection.  The random reflections are specified by a choice of Markov transition operator, which we call the {\em random reflection operator}, defined on the space of post-collision velocities.

The main results of the paper are summarized as follows, with more detailed and explicit statements to come in the next section.  We use the notion entropy production rate, defined in terms of the Kullback-Leibler divergence of a general stochastic process, to give a characterization of the irreversibility of random billiard systems in terms of the thermodynamic and geometric parameters that characterize the Markov transition operator.  In particular, we give an explicit formula for the entropy production rate in terms of the stationary distribution of our Markov chain model and show that the entropy production rate is always a non-negative quantity.  The result, which can be seen as the second law of thermodynamics for random billiard models, lays the groundwork for a more detailed study of how certain aspects of the models, such as the geometry of the billiard table or the choice of random reflection operator, affect the irreversibility of a system.  In the second half of the paper, we study random billiard systems with a random reflection operator called the {\em Maxwell-Smoluchowski reflection model} which models a boundary thermostat, or external heat bath, that keeps the system out of equilibrium.  We show that random billiard systems with the Maxwell-Smoluchowski reflection model are uniformly ergodic Markov chains, give an explicit description of the stationary distribution, and compute the entropy production rate analytically and numerically for various examples.

\emph{Main technical issues.} We aim to formalize problems in nonequilibrium statistical physics which are related to the second law of thermodynamics.  In particular, a primary problem is to quantify the notion of irreversibility of macroscopic systems, which are defined by reversible microscopic behavior, using entropy production rate. This has been done for a large class of mesoscopic models, namely stochastic differential equations (SDE), and a large class of microscopic models, namely countable state Markov chains; see the book \cite{qian} for a comprehensive overview. However, as far as we are aware, a rigorous study of entropy production rate for continuous state Markov chains, particularly those with non-compact state space, has not been done. The present work addresses this gap in the literature in the specific context of random billiard Markov chains, but the techniques and statements of results should hold for more general non-compact state space Markov chain models. The main work in establishing the entropy production rate formula and subsequent second law of thermodynamics is in showing that the probability measure for the Markov chain in forward time is mutually absolutely continuous with respect to the time reversed measure, from which it follows through careful calculations that the entropy production rate is well defined and satisfies a formula that makes explicit its dependence on the steady state stationary measure of the system. The work here requires a bit more detailed analysis than the SDE case because we cannot use the Cameron-Martin-Girsanov formula that is available for diffusion processes.

In the second part of the paper, we use coupling in order to prove the uniform ergodicity of random billiard models with the Maxwell-Smoluchowski reflection model. Such techniques are now standard, but there are always context specific details that need to be addressed, particularly in the case of non-compact state spaces. It should also be emphasized here that our model is particularly amenable to exact analytic study. Beyond proving uniform ergodicity, we are able give an explicit formula for the stationary measure, even in the case when the boundary thermostats are kept at different temperatures. The formula is given both in terms of temperatures and the geometry of the billiard table of the system. We believe that the techniques here should also serve as a basis for the further study of random billiards with boundary thermostats modeled by other reflection laws.

\emph{Related work.} The work in this paper fits into several established, overlapping lines of research\----random billiard models, models of heat conduction, and the mathematical analysis of nonequilibrium phenomena through entropy production rate. These lines of research  are united by the theme of studying models that are simple enough to be amenable to rigorous analysis but rich enough to demonstrate transport phenomena. While this philosophy is taken by much of the related work to be discussed below, the present work is distinguished in the way that explicit formulas linking entropy production rate, temperature, and microscopic parameters defining the system are attained. 

In random billiard models, much work has been done in studying the ergodicity of models with independent, identically distributed (i.i.d.) reflections. In \cite{E2001}, uniform ergodicity is proved for random billiards with the Knudsen reflection model that fixes the speed of the particle on tables with $C^1$ boundary. This work was extended in \cite{CPSV2009} to more general, but fixed speed reflection models, along with more relaxed restrictions on the boundary of the billiard domain. The issue of ergodicity for non-i.i.d. models, derived from Markov operators that model more complex boundary microstructures is arguably more subtle. Along these lines, ergodicity and rate of approach to equilibrium have been studied for certain classes of boundary microstructure in \cite{Feres2007, FZ2010, FZ2012}. For reflection models that model boundary thermostats less is known. In \cite{cook}, it is shown that the Maxwell-Boltzmann distribution is the equilibrium distribution for a large class of examples with a single heat bath. Ergodicity was proved for reflection models with multiple temperatures in \cite{khanin-yarmola}, but this is for a special class of billiard tables, using the geometry of dispersing Sinai billiards.

While there has been a resurgence of interest in the recent past for mechanical models of heat conduction and the study of nonequilibrium behavior in systems with thermostatted boundaries, there are few results that give explicit descriptions of the steady state distribution of the system and few results that give analytic, functional relationships between thermodynamic quantities like the entropy production rate to parameters of the system beyond thermostat temperatures. We should note that in the case of anharmonic oscillators driven by stochastic heat baths, there is in fact a fairly complete set of results. Under certain general assumptions, they can be modeled by stochastic differential equations. There are results on the existence and uniqueness of stationary distributions \cite{MR1685893} and rate of convergence, as well as other statistical properties, are well understood \cite{MR1889227, MR2227858}. Moreover, techniques as in \cite{qian} can be used to characterize and study entropy production rate \cite{MR1915300}. On the other hand, these are results at the mesoscopic scale, but things are much less complete for models defined at the microscopic scale, which include random billiards, from which mesoscopic models should arise through universal limiting laws. In addition to random billiard models, other microscopic models include those based on purely Hamiltonian mechanical systems coupled to stochastic heat baths \cite{lin, collet-eckmann, larralde, lin, MR2200889, MR3217533} and hybrid models consisting of mechanical systems with a stochastic component meant to make the systems more tractable \cite{MR3101481}. While much of the work on these models has been in studying the stationary states and rate of convergence to stationarity, the study of entropy production is a bit more limited. In \cite{bellet-demers-zhang}, fluctuations of entropy production rate are studied for perturbed Lorentz gas. Surveys such as \cite{MR3517572, MR1705589, jpr, hongqian, seifert} study entropy production for general classes of dynamical systems and Markov processes, but this approach is a bit different than the one we take, where we are concerned with how  explicit, microscopic details of the model affect entropy production rate.

\emph{Organization of the paper.} Section \ref{sec:main} introduces definitions and notation, and concludes with more precise statements of main results.  Section \ref{sec:Basic facts} establishes basic facts about random billiard Markov chains and random reflection operators.  Section \ref{sec:Second Law of Thermodynamics} establishes the preliminary details for the Second Law of Thermodynamics for random billiards and gives a proof of this result.  Section \ref{sec:Multi-temperature Maxwell-Smoluchowski systems} introduces random billiards with the Maxwell-Smoluchowski thermostat model.  There, we prove the uniform ergodicity of such Markov chains, giving an explicit formula for the stationary distribution, present an explicit formula for entropy production rate, and present some analytical and numerical results for some representative examples.  We conclude with Section \ref{sec:Efficiency and work production} where we give a final example which shows that random billiard systems can be used to produce work against an external force, creating what we call random billiard heat engines.  We present a short numerical study of efficiency and work production for random billiard heat engines.

\section{Definitions and Main Results}\label{sec:main}

Let $M$ be a smooth manifold of dimension $n$ with boundary $\partial M$. 
The boundary may contain points of non-differentiability, where the tangent space to $\partial M$ is not defined. 
Points where the tangent space is defined will be called {\em regular}.
The notion of a {\em manifold with corners} as defined in \cite{lee} is general enough to include all the examples of interest to us.
As the issue of regularity is not central  to the  results of this paper, we  simply assume throughout  that all points under consideration are  
regular.

Let $\pi:TM\rightarrow M$ indicate the tangent bundle to $M$.
The notations $x=(q,v)\in TM$ and $v\in T_qM$ will be used.
We are mostly concerned with tangent vectors at boundary points.
Thus it makes sense to introduce the bundle
$$\pi:N=\{(q,v)\in TM:q\in \partial M\}\rightarrow M $$ 
corresponding to the restriction of $\pi$ to the boundary.
Let  $N_q:=N\cap T_qM$.
We equip $M$ with a Riemannian metric $\langle\cdot,\cdot\rangle$.  
The inward pointing unit normal vector to the boundary at a regular point $q$ will be written $\mathbbm{n}_q$. 
Vectors $v\in N_q$ pointing to the interior of $M$ constitute the set $N_q^+$ of {\em post-collision} velocities; the negative of vectors in $N_q^+$ constitute the set $N_q^-$ of {\em pre-collision} velocities.  
Thus $$N_q^\pm := \{v\in N_q: \pm \langle v,\mathbbm{n}_q\rangle_q\geq 0\}. $$
The disjoint union of the $N_q^\pm$ over the regular boundary points defines $N^\pm$.

We suppose that $\partial M$ has finite $(n-1)$-dimensional volume and denote by $dA(q)$ the Riemannian volume element  of $\partial M$ at $q$. 
The notation $\bar{A}=A/A(\partial M)$ will indicate the normalized volume  when $\partial M$ has finite volume. 
Let $t\mapsto \varphi_t(x)$ denote the geodesic flow in $TM$, which is only defined for values of $t$ up to the moment when geodesics reach the boundary. 
Recall that the geodesic flow is the Hamiltonian flow (on the tangent bundle) for the free motion of a particle of mass $m$ with kinetic energy $E(q,v)=\frac12 m |v|_q^2$, 
where $|v|_q^2=\langle v,v\rangle_q$. 
For $x=(q,v)\in N^+$, let $t(x):= \inf \{t > 0 : \pi(\varphi_t(x))\in \partial M\}$.
The {\em return map} (to the boundary) $\mathcal{T}:N^+\rightarrow N^-$  is defined as 
$$ \mathcal{T}(x):=\varphi_{t(x)}(x).$$
Upon reaching the boundary, the billiard trajectory (i.e., an orbit of the geodesic flow) is reflected back into the manifold by a choice of map from $N^-$ to $N^+$; 
in deterministic billiards, the standard choice is the specular reflection $v\mapsto v-2\langle v,\mathbbm{n}_q\rangle_q \mathbbm{n}_q$. (The  theory of standard billiard dynamical systems largely deals with planar systems, as in  \cite{chernov}, but see also \cite{gaspard}.) 
For random billiards, this is done by means of a choice of {\em reflection operator} $P_q$ at each boundary point $q$, to be defined shortly.

It is possible to make this set-up more general by adding conservative forces, in which case the geodesic flow should be replaced with the Hamiltonian flow corresponding to a choice of potential function $\Phi:M\rightarrow \mathbb{R}$. Most of the paper will be concerned with free motion ($\Phi=0$) between collisions, except in Theorem \ref{heat entropy} (the Second Law of Thermodynamics), where we assume the more general setting. We then indicate by $E_0$ the above kinetic energy function and use $E$ to denote the total energy $E= E_0+\Phi\circ\pi$. The introduction of potential forces does not affect the form of the reflection law, which is intended to model impulsive forces; that is, very strong forces acting on very short time intervals. 

The definition of reflection operators given next is motivated by natural boundary  conditions for the Boltzmann equation  involving gas-surface interaction. See for example Chapter 1 in \cite{cercignani}. 

The space of Borel probability measures on a topological space $X$ will be indicated by $\mathcal{P}(X)$. Two probability measures will be called equivalent if they are mutually absolutely continuous. 
We often denote by $\mu(f):=\int_X f(x)\, d\mu(x)$ the integral of a function $f$ on $X$ with respect to a measure $\mu$.
A measure $\mu_\lambda$ will be said to depend measurably on elements   $\lambda$ of a measurable space if   for any bounded continuous function
$f$
 on $X$ the map
$\lambda\mapsto \mu_\lambda(f)$ is measurable.

\begin{definition}[General reflection operator]
A general  {\em reflection operator} at a regular boundary point $q$ is
a map $v\in N_q^- \mapsto P_{(q,v)}\in  \mathcal{P}(N_q^+)$.  A general reflection operator $P$ on $M$ is the assignment of such an operator for each regular boundary point $q$. 
 We suppose that $P$ depends measurably on $x=(q,v)$ in the sense that for any given bounded continuous function $f$ on $N^+$, the map $x\mapsto P(f)(x):= P_x(f)$ is measurable.
\end{definition}

The  operator $P$ on $M$ gives rise at each $q$ to a map $P_q$ from $\mathcal{P}(N_q^-)$ to $\mathcal{P}(N_q^+)$ defined by $\nu\mapsto \nu P_q$, where the integral of a test function $f$ on $N_q^+$ with respect to the latter is
$$(\nu P_q)(f) = \int_{N_q^-} P(f)(x)\, d\nu(x).  $$

A reflection operator will be defined as a general reflection operator satisfying the condition of {\em reciprocity}, 
whose definition depends on   the notion of a {\em Maxwellian} probability measure defined below.  
It is through the notion of reciprocity that  the property of $\partial M$ having a  given, fixed, temperature at a boundary point will be defined.
Let $dV_q(v)$ denote the Riemannian volume measure on $T_qM$.  
A measure $\mu$ on $N_q$ is said to have {\em density} $\rho(v)$ if $d\mu(v) = \rho(v)\, dV_q(v)$.

\begin{definition}[Maxwellian at temperature $T$]
The {\em Maxwellian} (or Maxwell-Boltzmann probability distribution) at  boundary point $q\in M$ and temperature $T(q)$ is the probability measure $\mu^\pm_q\in \mathcal{P}(N_q^\pm)$ having density
\begin{equation}\label{maxwellian} \rho_q(v) = 2\pi\left(\frac{\beta(q)m}{2\pi}\right)^{\frac{n+1}{2}}|\langle v, \mathbbm{n}_q\rangle| \exp\left\{-\beta(q)\frac{m|v|_q^2}{2}\right\}\end{equation}
where $m$ is the mass of the billiard particle, $\beta(q)=1/\kappa T(q)$, and  $\kappa$ is known as the Boltzmann constant.  
\end{definition}

At each regular boundary point $q$ consider the following  linear involutions: 
the {\em flip map}  $$J:N\rightarrow N, \ \ \ J(q,v)=(q, J_qv)=(q,-v)$$ and  the {\em time reversal} map
 $$\mathcal{R}_q(u,v)=(J_qv, J_qu)$$ on $N_q^-\times N_q^+$. Given a general reflection operator $P$, define $\zeta_q\in \mathcal{P}(N^-_q\times N^+_q)$ as
$$d\zeta_q(u,v)=d\mu^-_q(u)\, dP_{(q,u)}(v),$$
where $\mu_q^-$ is a Maxwellian at $q$.  If $F:X\rightarrow Y$ is a measurable map between measure spaces and $\mu$ is a probability measure on $X$, the {\em push-forward} of $\mu$ under $F$ is the probability measure $F_*\mu$ on $Y$ defined by 
$(F_*\mu)(E):=\mu(F^{-1}(E))$, where $E$ is a measurable subset of $Y$. If $F$ is a self-map of $X$, then $\mu$ is said to be {\em invariant} under $F$ if  $F_*\mu=\mu$. 

\begin{definition}[Reciprocity]\label{reciprocity} The general reflection operator $P$ has the property of {\em reciprocity} if at each regular boundary point $q$ 
the probability measure $\zeta_q$ (just defined above) is invariant under the time-reversal map $\mathcal{R}_q$. A general reflection operator satisfying reciprocity will be called simply a {\em reflection operator}.
\end{definition}

The notion of reciprocity may be  interpreted as a local detailed thermal equilibrium of the boundary at $q$.  It says that if a particle of mass $m$ hits the boundary at $q$ with a random pre-collision velocity distributed according to the Maxwellian at temperature $T(q)$, then it is reflected with the same distribution (at the same temperature), and this random scattering process is time-reversible in the stochastic sense.  
A more general, and somewhat more technical, definition of reciprocity than that of Definition \ref{reciprocity} will be formulated at the beginning of  Section \ref{sec:Second Law of Thermodynamics}. Theorem \ref{heat entropy}, the second law of thermodynamics, will be
proved for random billiards satisfying this more general notion. 

The billiard map of a standard deterministic billiard system is the composition of 
the geodesic translation $\mathcal{T}: x\in N^+\mapsto \mathcal{T}(x)\in N^-$ defined above 
and specular reflection at $\pi(\mathcal{T}(x))$. 
In a random billiard system the second map is replaced with the random scattering post-collision velocity distributed according to $P_{\mathcal{T}(x)}$. 
See Figure \ref{random_billiard_map}. 
Define
$$ \mathcal{D}:=\{(x,y)\in N^+\times N^+: \pi(y)=\pi(\mathcal{T}(x))\}.$$
Then an orbit of the random billiard system is a sequence $\dots, x_{-1}, x_0, x_1, \dots $ 
for which every pair $(x_i,x_{i+1})$ belongs to  $\mathcal{D}$.  
If the random billiard map sends $x$ to $y$, 
the probability distribution of $y$ is given by $ \mathcal{B}_x:=P_{\mathcal{T}(x)}$.
In what follows, it will be convenient to  refer to   the map $x\mapsto \mathcal{B}_x$ itself     as the billiard map.

\begin{figure}
\begin{center}
\includegraphics[width=0.45\textwidth]{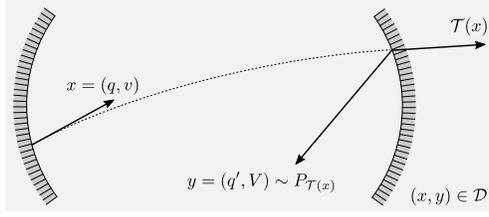}\ \ 
\caption{{\small The random billiard map is the composition of two maps: the geodesic translation $\mathcal{T}$ and scattering determined by  the reflection operator $P$. The distribution of the velocity $V$ after reflection is given by $\mathcal{B}_x=P_{\mathcal{T}(x)}$. }}
\label{random_billiard_map}
\end{center}
\end{figure}

\begin{definition}[Random billiard map]
The map $\mathcal{B}:N^+\rightarrow \mathcal{P}(N^+)$ defined by $\mathcal{B}_x=P_{\mathcal{T}(x)}$, where $P$ is a reflection operator, will be called the {\em random billiard map} associated to $P$. 
\end{definition}

We may use at different places the various notations
$$ (\mathcal{B}f)(x)=\mathcal{B}_x(f)=(\delta_x\mathcal{B})(f)=P_{\mathcal{T}(x)}(f)$$
where $\delta_x$ is the point mass supported at $x$ and $f$ is a test function on $N^+$.

Given a choice of initial probability distribution, we obtain from $\mathcal{B}$ a Markov chain $X_0, X_1, \dots$ on the state space $N^+$. 
Note that 
$$(\mathcal{B}f)(x)= \mathbb{E}\left[f(X_{i+1})|X_i=x\right] $$
where $\mathbb{E}$ indicates (conditional) expectation. 
We define the space of finite chain segments of length $n+1$ by
$$ \mathcal{D}_{[0,n]}=\{(x_0, \dots, x_n): (x_i,x_{i+1})\in \mathcal{D}, i=0, \dots, n-1\}.$$
Note that $\mathcal{D}=\mathcal{D}_{[0,1]}$.  
The notion of entropy production rate, to be considered shortly, involves the time reversal of the process. 
Clearly, simple reversal of the Markov chain, 
in which a chain segment $(x_0, \dots, x_n)$ is mapped to $(x_n,\cdots, x_0)$, 
cannot correspond to the physical idea of time reversal\----the direction of velocities must also be reversed. 
In order to define  {\em proper time reversal}  we introduce the map $\mathcal{J}=J\circ \mathcal{T}:N^+\rightarrow N^+$.

\begin{definition}[Proper time reversal map]
The {\em proper time reversal map}, or simply the {\em reversal map}, on chain segments is the map $ \mathcal{R}:\mathcal{D}_{[0,n]}\rightarrow \mathcal{D}_{[0,n]}$ defined by
$$\mathcal{R}: (x_0, \dots, x_n)\mapsto (\mathcal{J}x_n, \dots, \mathcal{J}x_0). $$
\end{definition}

A probability measure $\nu$ on $N^+$ is {\em stationary} for the  random billiard process if $\nu\mathcal{B}=\nu$.  Applied to a test function $f$ on $N^+$, this condition means that
$$ \int_{N^+}f(x)\, d\nu(x) = \int_{N^+} f(y)\,  dP_{\mathcal{T}(x)}(y)\, d\nu(x).$$
 It is useful to also define $\nu^-:=\mathcal{T}_*\nu\in \mathcal{P}(N^-)$. For emphasis we may on occasion write $\nu^+:=\nu$.
 The issue of existence, uniqueness, regularity, and ergodicity of stationary measures will be addressed in Section \ref{sec:Multi-temperature Maxwell-Smoluchowski systems} for a general class of examples.
 
 Let $X_0, X_1, \dots$ be a stationary Markov chain with state space $N^+$, transition operator $\mathcal{B}$, and stationary probability $\nu$.  For a test function $f$ on $N^+$,  $\mathbb{E}\left[f(X_i)\right]=\nu(f).$ Finite chain segments in $\mathcal{D}_{[0,n]}$ are distributed according to the probability measure $\mathbb{P}_{[0,n]}$ defined by
$$d\mathbb{P}_{[0,n]}(x_0, \dots, x_n)= d\nu(x_0)\,  d\mathcal{B}_{x_0}(x_1)\, \cdots \, d\mathcal{B}_{x_{n-1}}(x_n).$$

Given a stationary chain segment $X_0, \dots, X_n$, let $(Y_0, \dots, Y_n)=\mathcal{R}(X_0, \dots, X_n)$ be its proper time reversal. We introduce the operator $\mathcal{B}^-$ defined, for now, on bounded continuous functions:
$$ (\mathcal{B}^-f)(x):= \mathbb{E}\left[f(Y_{i+1})|Y_i=x\right].$$

  We previously defined the probability measure $\mathbb{P}_{[0,n]}$ on the space of finite chain segments $\mathcal{D}_{[0,n]}$. 
Correspondingly, we define the probability measure $\mathbb{P}^-_{[0,n]}$ on the same space by
$$d\mathbb{P}^-_{[0,n]}(x_0, \dots, x_n)= d\nu(x_0)\,  d\mathcal{B}^-_{x_0}(x_1)\, \cdots \, d\mathcal{B}^-_{x_{n-1}}(x_n).$$

Following \cite{qian} we make the   definitions given next.
\begin{definition}[Relative entropy]
Suppose that $\mathbb{P}_1$ and $\mathbb{P}_2$ are two probability measures on a measurable space $(\mathcal{D},\mathcal{F})$. The {\em relative entropy} of $\mathbb{P}_1$ with respect to $\mathbb{P}_2$ is defined as
$$ H(\mathbb{P}_1, \mathbb{P}_2):=\begin{cases}
\int_{\mathcal{D}} \log\left( \frac{d\mathbb{P}_1}{d\mathbb{P}_2}\right)\, d\mathbb{P}_1 &\text{ if } \mathbb{P}_1\ll\mathbb{P}_2 \text{ and } \log\left( \frac{d\mathbb{P}_1}{d\mathbb{P}_2}\right) \in L^1(\mathcal{D}, \mathbb{P}_1)\\
+\infty &\text{ otherwise.}
\end{cases}$$
\end{definition}
\begin{definition}[Entropy production rate]
The {\em entropy production rate} of the stationary Markov chain $X_0, X_1, \dots$ defined by $\nu$ and $\mathcal{B}$ is defined by
$$ e_p:= \lim_{n\rightarrow \infty}\frac1n H\left(\mathbb{P}_{[0,n]},\mathbb{P}^-_{[0,n]}\right)$$
where $H\left(\mathbb{P}_{[0,n]},\mathbb{P}^-_{[0,n]}\right)$ is the relative entropy of $\mathbb{P}_{[0,n]}$ with respect to
$\mathbb{P}^-_{[0,n]}$ restricted to the $\sigma$-algebra generated by $X_0, \dots, X_n$. 
\end{definition}

Let $\nu_q$ be the probability measure on $N^+_q$ obtained  by disintegrating  $\nu$ with respect to $\pi:N^+\rightarrow \partial M$. So, if $x=(q,u)\in N^+$,
    $$ d\nu(x) = d\nu_q(u)\, d(\pi_*\nu)(q).$$
    Recall that $\nu^-=\mathcal{T}_*\nu$, and $\mu_q^+$ is the Maxwellian at boundary point $q$ for temperature $T(q)$. 
    The following assumptions  will be made concerning the stationary measure $\nu$:
    \begin{enumerate}
    \item  $\pi_*\nu\ll A$ where $A$ is the $(n-1)$-dimensional Riemannian volume on $\partial M$;
    \item  The measures $\nu_q$, $\mu^+_q$, $J_*\nu_q^-$ are mutually equivalent. 
    \end{enumerate}
   
 We  define the measures $\mu^\pm\in \mathcal{P}(N^\pm)$ by 
   $$d\mu^\pm(x)=d\mu_q^\pm(u)\, d\bar{A}(q)$$ 
   where $\bar{A}=A/A(\partial M)$ is the normalized Riemannian volume measure on $\partial M$.  Moreover, let $S \subset N $ denote the bundle of unit vectors in $N$, $S^+=S \cap N^+$, and $S^+_q=S^+\cap N_q$.  Let $\sigma \in \mathcal P(S^+)$ be given by
\begin{equation}\label{billiard measure}
d\sigma(q,u)= C\left\langle u, \mathbbm{n}_q\right\rangle\,  dV_q^{\text{\tiny S}}(u)\, dA(q),
\end{equation}
where $C$ is a normalizing constant and $V_q^{\text{\tiny S}}$ is the Riemannian volume measure on $S_q^+$.

    \begin{theorem}[Second law of thermodynamics]\label{heat entropy}
     Let  $E_0(q,u)=\frac12 m|u|^2_q$ denote the kinetic energy function and let $\Phi(q)$ be a bounded measurable function defined on the boundary of $M$ which denotes the potential function of the system.
 Suppose that a stationary probability measure $\nu$ for the random billiard map $\mathcal{B}$ exists and satisfies assumptions 1 and 2 above.
 Then
 \begin{equation*} e_p= - \bigintssss_{\partial M}\frac{\nu^+_q(E_0)-\nu^-_q(E_0)}{\kappa T(q)}\, d\bar{A}(q) - \bigintssss_{N^+}\frac{\Phi(\pi(\mathcal{T}(x))) - \Phi(\pi(x))}{\kappa T(\pi(x))}
 \, d\nu(x)\geq 0.\end{equation*}
 In words, the entropy production rate (per collision with the boundary) is the average of the energy gained at each iteration of the random billiard system   divided by the temperature at the point of collision.
 \end{theorem}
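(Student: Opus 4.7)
The plan is to reduce the limit defining $e_p$ to a single-step expectation via the Markov structure and stationarity, compute the one-step Radon--Nikodym derivative explicitly via reciprocity and Bayes's rule, and then simplify using the Maxwellian form of $\mu^\pm_q$ and Hamiltonian energy conservation.  Concretely, the product structure of $\mathbb{P}_{[0,n]}$ and $\mathbb{P}^-_{[0,n]}$ gives, once mutual absolute continuity is in hand,
\[ \frac{d\mathbb{P}_{[0,n]}}{d\mathbb{P}^-_{[0,n]}}(x_0,\dots,x_n)=\prod_{i=0}^{n-1}\frac{d\mathcal{B}_{x_i}}{d\mathcal{B}^-_{x_i}}(x_{i+1}), \]
and stationarity of $\nu$ collapses the $n$ contributions to the common per-step expectation $e_p=\int\log\frac{d\mathcal{B}_x}{d\mathcal{B}^-_x}(y)\,d\mathcal{B}_x(y)\,d\nu(x)$, so the problem reduces to computing this single-step expectation.

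For the explicit density, I would apply Bayes's rule to the joint law of the stationary pair $(X_0,X_1)$ (with common boundary point $q=\pi(\mathcal{T}(X_0))=\pi(X_1)$) and convert the reversed conditional of $X_0$ given $X_1$ into the law of $Y_1=\mathcal{J}(X_0)$ using the involutions $\mathcal{J}^2=\mathrm{id}$ and $\mathcal{T}\circ J\circ\mathcal{T}=J$, both consequences of time-reversibility of the Hamiltonian flow.  Combining the outcome with reciprocity in density form, $p_q(u,v)=p_q(Jv,Ju)$, together with the stationarity equation $r^+_q(v)=\int p_q(u,v)\,r^-_q(u)\,d\mu^-_q(u)$ for the densities $r^\pm_q=d\nu^\pm_q/d\mu^\pm_q$, I expect to obtain
\[ \frac{d\mathcal{B}_x}{d\mathcal{B}^-_x}(y)=\frac{r^+_q(\mathcal{J}(x))}{r^-_q(Jy)},\qquad q=\pi(y)=\pi(\mathcal{T}(x)). \]

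The final step is algebraic.  Expanding $\log r^\pm_q$ via the Maxwellian density $d\mu^\pm_q(w)=C_q|\langle w,\mathbbm{n}_q\rangle|\exp(-\beta(q)E_0(w))\,dV_q(w)$, the $J$-invariance of $E_0$ and $|\langle\cdot,\mathbbm{n}_q\rangle|$ together with $E_0(\mathcal{J}(x))=E_0(\mathcal{T}(x))$ cause the $\log C_q$ and $\log|\langle\cdot,\mathbbm{n}_q\rangle|$ terms to cancel in $\log r^+_q(\mathcal{J}(x))-\log r^-_q(Jy)$, leaving the kinetic piece $\beta(q)\bigl(E_0(\mathcal{T}(x))-E_0(y)\bigr)$ plus a residual depending only on the densities $\tilde\phi^\pm_q$ of $\nu^\pm_q$ with respect to the flux measure $|\langle\cdot,\mathbbm{n}_q\rangle|\,dV_q$.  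Taking expectation under the stationary chain, the kinetic piece becomes the boundary integral of $\beta(q)(\nu^-_q(E_0)-\nu^+_q(E_0))$; invoking energy conservation $E_0(\mathcal{T}(x))=E_0(x)+\Phi(\pi(x))-\Phi(\pi(\mathcal{T}(x)))$ on that expectation splits off the potential term of the theorem, and the residual $\tilde\phi$-contribution is to be shown to vanish using the stationarity relation and reciprocity once more.  Non-negativity $e_p\geq 0$ then follows at once from the fact that each $H(\mathbb{P}_{[0,n]},\mathbb{P}^-_{[0,n]})\geq 0$ as a Kullback--Leibler divergence.

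The main obstacle, as the introduction forecasts, is establishing the preliminary mutual absolute continuity $\mathbb{P}_{[0,n]}\sim\mathbb{P}^-_{[0,n]}$ on $\mathcal{D}_{[0,n]}$ for every $n$, together with the integrability of $\log(d\mathcal{B}_x/d\mathcal{B}^-_x)$ in $L^1(\mathbb{P}_{[0,n]})$ needed before any of the manipulations above are legitimate; this requires carefully propagating the boundary-point equivalence of $\nu_q$, $\mu^+_q$, and $J_*\nu^-_q$ from Assumption~2 through the non-compact velocity directions and the return map $\mathcal{T}$, and is precisely the step the paper flags as unavailable from a Cameron--Martin--Girsanov shortcut.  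A secondary delicate point is verifying that the $\tilde\phi$-contribution in the final algebraic step indeed cancels, encoding the statement that all nonequilibrium information in $\nu$ is already summarized at the boundary by the energy balance.
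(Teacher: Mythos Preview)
Your overall architecture matches the paper's: reduce $e_p$ to a one-step expectation via the product structure and stationarity, identify the one-step Radon--Nikodym derivative through reciprocity, and then read off the energy balance from the explicit Maxwellian form.  Two points deserve correction, however.

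First, the mutual absolute continuity you flag as ``the main obstacle'' is not actually proved from scratch in the theorem.  The paper simply \emph{assumes} that $\eta$ and $\eta^-$ are equivalent (this is stated just after $\eta^-$ is introduced), and then the product form of $\mathbb{P}_{[0,n]}$ and $\mathbb{P}^-_{[0,n]}$ immediately yields their equivalence and the claimed factorization of the density.  Assumptions~1 and~2 on $\nu$ are used only to make sense of the various Radon--Nikodym derivatives that appear; there is no delicate propagation argument.  So what you describe as the hardest step is in fact hypothesis.

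Second, and more substantively, your plan for killing the ``residual $\tilde\phi$-contribution'' is both vague and unnecessarily roundabout.  The paper's device here is cleaner and worth knowing.  Rather than working with $r^\pm_q=d\nu^\pm_q/d\mu^\pm_q$ at the collision point $q=\pi(y)$, the paper writes the density as
\[
\frac{d\eta}{d\eta^-}(x,y)=\frac{l(x,y)}{l(\mathcal{R}(x,y))},\qquad l(x,y)=g(x)\,h(\mathcal{J}y),
\]
where $g=d\nu/dm$ and $h(q,u)=C_q\exp\{-\beta(q)E(q,u)\}$.  Then, since $\eta^-=\mathcal{R}_*\eta$,
\[
e_p=\int_{\mathcal{D}}\log\frac{l}{l\circ\mathcal{R}}\,d\eta=\int_{\mathcal{D}}[d\eta-d\eta^-]\,\log l
      =\int_{\mathcal{D}}[d\eta-d\eta^-]\,\log g(x)\;-\;\int_{\mathcal{D}}[d\eta-d\eta^-]\,\log h(\mathcal{J}y).
\]
The first integral vanishes for a structural reason that bypasses any second appeal to reciprocity or to Liouville invariance: both $\eta$ and $\eta^-$ have the \emph{same} $x$-marginal $\nu$ (since $d\eta^-(x,y)=d\nu(x)\,d\mathcal{B}^-_x(y)$), so any function of $x$ alone integrates identically against them.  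What survives is an integral involving only $h$, and since $\log h(q,u)=\log C_q-\beta(q)E(q,u)$, the $\log C_q$ term drops fiberwise against the difference of probability measures $\nu^+_q-J_*\nu^-_q$, leaving exactly the energy expression in the statement.  Your route, by contrast, places the $\nu$-density at the collision point $q=\pi(\mathcal{T}(x))$ via $r^+_q(\mathcal{J}x)$, and then cancelling this contribution genuinely requires the $\mathcal{J}$-invariance of the boundary Liouville measure, which is a stronger input than you acknowledge with ``stationarity and reciprocity once more.''
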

 
 That the displayed quantity in the theorem is non-negative means that, in the absence of a potential energy function, 
 the forward direction in time for the Markov chain is distinguished from the time-reversed process in the following sense: on average,  energy is extracted from the regions of  higher temperature   of $\partial M$ and released  into the regions of lower temperature. 
 
With this theorem in hand, the importance of systematically understanding the stationary measure $\nu$ for random billiard systems is readily seen.  The rest of our main results are centered around the study of the stationary measure for a class of examples whose reflection operator is of the Maxwell-Smoluchowski kind.

\begin{definition}[Maxwell-Smoluchowski model]\label{def:ms-model}
Let $\text{\em Ref}_q$ denote the specular reflection at the regular boundary point $q$, and fix 
 $\alpha(q)\in (0,1]$.  
 Let $\mu_q^+$ be the Mawellian at $q$.
 Define   $P_q$ by its evaluation on a test function $f$ as follows:
 $$ P_x(f)=\alpha(q)\mu_q^+(f)+(1-\alpha(q)) f(\text{\em Ref}_qx).$$
 Thus the  surface scattering process defined by this  general reflection operator, known as the {\em Maxwell-Smoluchowski model},  
 has the effect of mapping a pre-collision velocity $v$ of an incoming particle at $q$ to the random post-collision velocity $V$
 whose probability distribution is $\mu_q^+$ with probability $\alpha(q)$ and the specular reflection of $v$ with probability $1-\alpha(q)$.  
\end{definition}

The next theorem, informally stated below and stated precisely in Theorem \ref{uniform ergodicity} and Proposition \ref{entropy formula proposition} of Section \ref{sec:Multi-temperature Maxwell-Smoluchowski systems}, considers a random billiard system with Maxwell-Smoluchowski reflection operator whose boundary $\partial M$ is partitioned into $N$ components $\Gamma_i$ with temperatures $T(q) \equiv T_i$ and constants $\alpha(q) \equiv \alpha_i$, for $i = 1,\ldots, N$.

 \begin{theorem}
 Under mild regularity conditions on $\partial M$ (see Assumption \ref{Lipschitz condition}), the random billiard Markov chain $(X_n)_{n\geq 0}$ is uniformly ergodic.  Moreover, when the boundary temperatures $T_1,\ldots,T_N$ are equal, say to a constant $T_0 > 0$, the stationary distribution $\nu$ is given by $$d\nu(q,v) = \rho_q(v)dV_q(v)d\bar A(q)$$ where $\rho_q(v)$ is the Maxwellian density given in (\ref{maxwellian}) with constant temperature $T(q) \equiv T_0$.  When the boundary temperatures are not equal, the stationary distribution, expressed in polar coordinates as a measure on $S^+ \times (0,\infty)$ is given by $d\nu(q,u,r) = d\nu^{\mathrm s}_q(r)d\sigma(q,u)$, where $\nu^\mathrm{s}_q$ is the stationary distribution of the speed after collision with boundary point $q$ and $\sigma$ is defined in Equation (\ref{billiard measure}).  The measure $\nu_q^\mathrm s$ is constant on components $\Gamma_i \subset \partial M$, so we let $\nu_i^\mathrm s := \nu_q^\mathrm s$ for $q \in \Gamma_i$.  Letting $\nu^\mathrm s$ be the $N$-dimensional vector with components $\nu_i^\mathrm s$, $$\nu^\mathrm s = (I - Q)^{-1}\pi,$$ where
$Q$ is an $N \times N$ matrix and $\pi$ an $N$-dimensional vector given by $$Q_{ij} = (1-\alpha_i)p_{ij} A_i/A_j, \quad \pi_i = \bar A_i\alpha_i\mu_i^\mathrm s$$ with $A_i = A(\Gamma_i)$ and $\bar A_i = \bar A(\Gamma_i)$ the volume measure and normalized volume measure of boundary component $\Gamma_i$ respectively.  Finally, 
 the entropy production rate is given by
$$
e_p = -\sum_{j =1}^N \frac{\nu_j^+(E_0) - \nu_j^-(E_0)}{\kappa T_j} = -\sum_{i,j =1}^N \frac{\nu_j^+(E_0) - \nu_i^+(E_0)}{\kappa T_j} p_{ij}\frac{A_i}{A_j}
$$
where $\nu^+ = \nu$, $\nu^- = \mathcal T_* \nu$, $\nu_j^\pm$ is the restriction to $N_j^\pm = \{(q,u) \in N^\pm : q \in \Gamma_i\}$, and $p_{ij}$ is the conditional probability of the billiard particle colliding next with boundary component $\Gamma_j$ given that its current position is on boundary component $\Gamma_i$.  

 \end{theorem}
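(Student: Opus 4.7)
The plan is to establish the four claims---uniform ergodicity, stationary measure at equal temperatures, stationary measure at multiple temperatures, and the entropy production formula---separately, using the resampling structure of the Maxwell--Smoluchowski operator (Definition \ref{def:ms-model}) together with the abstract second law (Theorem \ref{heat entropy}).

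For uniform ergodicity, I would aim for a Doeblin minorisation $\mathcal{B}^k(x,\cdot) \geq c\,\mu_*(\cdot)$ uniform in $x \in N^+$, for some $k$, $c > 0$, and probability $\mu_*$ independent of $x$; uniform ergodicity with a geometric rate then follows from standard Markov chain theory. The driver is that at every collision the kernel $P_q$ draws from the Maxwellian at $q$ with probability at least $\alpha_{\min} := \min_i \alpha_i > 0$, instantly erasing the past and producing Gaussian speed tails. The spatial mixing step---that resampled orbits reach each component $\Gamma_j$ with uniformly positive probability in a bounded number of steps---is where the boundary regularity Assumption \ref{Lipschitz condition} enters. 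This is the main technical obstacle: $N^+$ is non-compact in the speed coordinate, so the minorisation has to be built without appealing to compactness, and the uniform lower bound on hitting probabilities must hold even for geometrically delicate starting configurations.

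For the equal-temperature case, reciprocity (Definition \ref{reciprocity}) implies $\mu_q^+ P_q = \mu_q^+$, so the global Maxwellian $d\mu^+(q,u) = \rho_q(u)\,dV_q(u)\,d\bar A(q)$ is preserved by the scattering step. Using the factorisation $\mu^+ = \phi_{T_0}(|u|)\,d|u|\otimes\sigma$ with $\sigma$ the cosine measure of \eqref{billiard measure}, invariance under the geodesic translation follows since speed is preserved by free motion and $\mathcal{T}_*\sigma = \sigma$ is the standard Liouville identity for billiards. For the multi-temperature case, the angular factorisation is still available because the angular part of $\mu_q^+$ is $\sigma_q$ at every temperature and specular reflection also preserves $\sigma_q$, so I would seek $\nu$ in the product form $d\nu(q,u,r) = d\nu_q^{\mathrm s}(r)\,d\sigma(q,u)$. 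The angular factor is then automatically $\mathcal{B}$-invariant, reducing stationarity to a balance equation for the speed marginals $\nu_j^{\mathrm s}$, which are constant on each $\Gamma_j$ because $\alpha$ and $T$ are. Writing ``the pre-collision speed at $\Gamma_j$ equals the post-collision speed at some $\Gamma_i$ with probability proportional to $\bar A_i p_{ij}$'' and then applying the Maxwell--Smoluchowski rule at $\Gamma_j$ yields, after bookkeeping of the area factors, the linear system $\nu^{\mathrm s} = \pi + Q\nu^{\mathrm s}$. Since $\alpha_{\min} > 0$, $\|Q\| < 1$ and $I - Q$ is invertible, giving $\nu^{\mathrm s} = (I-Q)^{-1}\pi$.

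For the entropy production rate I would apply Theorem \ref{heat entropy} with $\Phi \equiv 0$ and piecewise-constant temperature $T(q) = T_j$ on $\Gamma_j$; the boundary integral collapses to $-\sum_j (\nu_j^+(E_0) - \nu_j^-(E_0))/(\kappa T_j)$. The second expression follows from conservation of kinetic energy between collisions: the pre-collision mean $\nu_j^-(E_0)$ equals the mixture of post-collision means at earlier components, weighted by the conditional distribution of the previous collision location, giving $\nu_j^-(E_0) = \sum_i \nu_i^+(E_0)\,p_{ij}\,A_i/A_j$; substitution and relabelling yields the stated double sum. Once the uniform ergodicity step is secured, the rest of the argument is essentially linear algebra and an application of the abstract second law.
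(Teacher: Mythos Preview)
Your proposal is correct and follows essentially the same route as the paper. The paper packages the uniform ergodicity step slightly differently: it observes that the Maxwell--Smoluchowski operator is \emph{projective}, so $\overline{X}_n=\Pi(X_n)$ is itself a Markov chain on the compact space $S^+$, invokes the cited Knudsen-walk result (Theorem~\ref{Knudsen-ergodicity}) to get a small-set minorisation $\overline{\mathcal B}^m_{\bar x}\ge\overline\phi$ there, and then lifts $\overline\phi$ to $N^+$ by tensoring with the Maxwellian speed density---this is exactly your ``spatial mixing plus Gaussian speed resampling'' in disguise, but framed so that the non-compactness is handled by a clean product construction rather than an ad hoc bound. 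For the multi-temperature stationary measure the paper arrives at the same fixed-point system $\nu^{\mathrm s}=\pi+Q\nu^{\mathrm s}$ by iterating $\nu_{k+1}=\pi+Q\nu_k$ from an arbitrary initial law and letting $k\to\infty$, whereas you solve the fixed point directly via $\|Q\|<1$; these are equivalent. For the entropy formula the paper's computation is identical to yours, with the one extra observation you should make explicit: passing from $\nu_j^-(E_0)=\sum_i \nu_i^+(E_0)\,p_{ij}A_i/A_j$ to the double sum uses the detailed-balance identity $\sum_i p_{ij}A_i/A_j=1$ for the stationary Knudsen walk.
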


In Subsection \ref{subsec:Examples}, these results are used to compute the entropy production rate explicitly for a series of examples, emphasizing the influence of system parameters.

\section{Basic facts about the Markov chain}\label{sec:Basic facts}

Consider the Hilbert space $\mathcal{H}:=L^2(N^+,\nu)$.  No confusion should arise if we use the same notation $\langle\cdot, \cdot\rangle$ for the inner product on $\mathcal{H}$  as for the Riemannian metric on $M$. We now consider   $\mathcal{B}$  as   an operator on $\mathcal{H}$. 
 \begin{proposition}
 The billiard map $\mathcal{B}$, regarded as an operator on $\mathcal{H}$, has norm $\|\mathcal{B}\|_2=1$. 
 \end{proposition}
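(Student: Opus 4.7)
The plan is to establish the norm bound $\|\mathcal{B}\|_2 \leq 1$ by a standard Jensen/Cauchy--Schwarz argument exploiting the fact that $\mathcal{B}_x$ is a probability measure and that $\nu$ is stationary, and then observe that equality is attained on constant functions.

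For the upper bound, fix $f \in \mathcal{H}$ and apply the Cauchy--Schwarz inequality (or Jensen's inequality applied to the convex function $t \mapsto t^2$) to the expectation $(\mathcal{B}f)(x) = \int_{N^+} f(y)\, d\mathcal{B}_x(y)$. Because $\mathcal{B}_x \in \mathcal{P}(N^+)$ is a probability measure, this yields the pointwise bound $|(\mathcal{B}f)(x)|^2 \leq (\mathcal{B}|f|^2)(x)$. Integrating against $\nu$ and invoking stationarity $\nu \mathcal{B} = \nu$ gives
\begin{equation*}
\|\mathcal{B}f\|_2^2 = \int_{N^+} |(\mathcal{B}f)(x)|^2 \, d\nu(x) \leq \int_{N^+} (\mathcal{B}|f|^2)(x)\, d\nu(x) = \int_{N^+} |f|^2 \, d\nu = \|f\|_2^2,
\end{equation*}
so $\|\mathcal{B}\|_2 \leq 1$.

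For the matching lower bound, take $f \equiv 1$, which lies in $\mathcal{H}$ since $\nu$ is a probability measure. Since each $\mathcal{B}_x$ is a probability measure, $(\mathcal{B}1)(x) = \mathcal{B}_x(1) = 1$ for every $x$, so $\|\mathcal{B}1\|_2 = 1 = \|1\|_2$. Combined with the previous paragraph, this forces $\|\mathcal{B}\|_2 = 1$.

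I do not expect any genuine obstacle here: the argument is the textbook proof that a Markov operator is a contraction on the $L^2$ space of its stationary measure. The only two facts one needs are that $\mathcal{B}_x$ is a probability measure (built into Definition of a reflection operator together with $\mathcal{B}_x = P_{\mathcal{T}(x)}$) and that $\nu$ is stationary (assumed in the setup), both of which are explicitly in force.
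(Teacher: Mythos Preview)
Your proof is correct and follows essentially the same approach as the paper: apply Cauchy--Schwarz (or Jensen) to the probability measure $\mathcal{B}_x$ to obtain the pointwise bound $|(\mathcal{B}f)(x)|^2 \leq (\mathcal{B}|f|^2)(x)$, integrate against $\nu$ and use stationarity to get $\|\mathcal{B}\|_2\leq 1$, and then observe $\mathcal{B}1=1$ for the reverse inequality.
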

 \begin{proof}
This   is an  immediate consequence of the Cauchy-Schwarz inequality and the fact that $\nu$ and $P_x$ are probability measures:
$$\|\mathcal{B}f\|^2_2=\int_{N^+}|P_{\mathcal{T}(x)}(f)|^2\, d\nu(x)\leq \int_{N^+}\int_{N^+_{\pi(\mathcal{T}(x))}} |f(y)|^2\, dP_{\mathcal{T}(x)}(y)\, d\nu(x)=(\nu\mathcal{B})(|f|^2).$$
But stationarity implies $(\nu\mathcal{B})(|f|^2)=\nu(|f|^2)=\|f\|_2^2$, so that $\|\mathcal{B}\|_2\leq 1$. The observation $\mathcal{B}1=1$ concludes the proof.
 \end{proof}

 \begin{proposition}[Transition operator for the time-reversed chain]\label{Bminus}
Suppose that $\nu$ and $\mathcal{J}_*\nu$ are equivalent measures, so that the Radon-Nikodym derivative $\rho^\nu:=\frac{d(\mathcal{J}_*\nu)}{d\nu}$ is defined.  Then
$ \mathcal{B}^- = \rho^\nu \mathcal{J} \mathcal{B}^* \mathcal{J}.$
Here $\mathcal{B}^*$ is the adjoint operator to $\mathcal{B}$, $\mathcal{J}$ is the composition operator $\mathcal{J}f:=f\circ \mathcal{J}$, and $\rho^\nu$ is   identified with its multiplication operator.
 \end{proposition}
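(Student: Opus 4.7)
\emph{Proof sketch.} The strategy is to identify $\mathcal{B}^-$ by computing the one-step joint expectation $\mathbb{E}[g(Y_i)f(Y_{i+1})]$ for bounded test functions $f, g$ on $N^+$ in two different ways and matching the resulting expressions. This is the standard route to extracting the transition operator of a time-reversed chain; the twist here is that the reversal is of the physical kind, built around the involution $\mathcal{J}=J\circ\mathcal{T}$, which in general does not preserve the forward stationary measure $\nu$.

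On the one hand, by the tower property and the defining relation $(\mathcal{B}^-f)(y) = \mathbb{E}[f(Y_{i+1})\mid Y_i=y]$, together with the fact that $Y_i = \mathcal{J}X_{n-i}$ has marginal law $\mathcal{J}_*\nu$, one obtains
\[
\mathbb{E}[g(Y_i)f(Y_{i+1})] \;=\; \int g(y)\,(\mathcal{B}^-f)(y)\,d(\mathcal{J}_*\nu)(y).
\]
On the other hand, stationarity of the forward chain and a Fubini over the one-step joint $\nu(dx_0)\mathcal{B}(x_0, dx_1)$, followed by the definition of the $L^2(\nu)$-adjoint, gives
\[
\mathbb{E}[g(Y_i)f(Y_{i+1})] \;=\; \int (f\circ\mathcal{J})(x)\,\bigl(\mathcal{B}(g\circ\mathcal{J})\bigr)(x)\,d\nu(x) \;=\; \langle \mathcal{B}^*\mathcal{J}f,\,\mathcal{J}g\rangle_\nu.
\]

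To match the two computations I would invoke the key geometric fact that $\mathcal{J}$ is an involution on $N^+$. This follows from the time-reversibility of geodesic motion: starting at $(q,v)$, travelling to the next boundary point $(q',v')$, flipping the velocity to $-v'$, travelling back along the reversed geodesic to $(q,-v)$, and flipping once more returns $(q,v)$. The involution property $\mathcal{J}^2=\mathrm{id}$ permits the change of variables $x=\mathcal{J}y$ in the inner product on the right, which exchanges integration against $\nu$ for integration against $\mathcal{J}_*\nu$. Rewriting $d(\mathcal{J}_*\nu) = \rho^\nu\,d\nu$ via the equivalence hypothesis, bringing the inner product into the same form as the first expression, and peeling off the arbitrary test function $g$ then yields the claimed operator identity, with the Radon--Nikodym derivative $\rho^\nu$ appearing as the Jacobian of $\mathcal{J}$ relative to $\nu$.

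The main obstacle is the careful bookkeeping of the two natural reference measures---$\nu$ for the forward chain and $\mathcal{J}_*\nu$ for the reverse chain---together with the Jacobian $\rho^\nu$ that mediates between them. The equivalence hypothesis is precisely what makes $\rho^\nu$ available as a genuine Radon--Nikodym derivative, and without it the operator identity would not even be well-defined modulo $\nu$-null sets. A secondary, essentially geometric, point is the verification of $\mathcal{J}^2=\mathrm{id}$, which is what enables the change of variables that drives the entire matching procedure.
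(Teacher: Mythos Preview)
Your strategy—compute the one–step joint expectation two ways and match—is exactly the paper's, and your second displayed identity $\mathbb{E}[g(Y_i)f(Y_{i+1})]=\langle \mathcal{B}^*\mathcal{J}f,\mathcal{J}g\rangle_\nu=\langle \mathcal{J}f,\mathcal{B}\mathcal{J}g\rangle_\nu$ is precisely what the paper obtains. The gap is in your matching step. Your first display integrates $g\,(\mathcal{B}^-f)$ against $d(\mathcal{J}_*\nu)$, correctly using $Y_i\sim\mathcal{J}_*\nu$. If you now carry out the change of variables $x=\mathcal{J}y$ in the second display, $d\nu$ pushes forward to $d(\mathcal{J}_*\nu)$ and the integrand becomes $(\mathcal{J}\mathcal{B}^*\mathcal{J}f)(y)\,g(y)$; equating the two $d(\mathcal{J}_*\nu)$–integrals and stripping off $g$ gives $\mathcal{B}^-=\mathcal{J}\mathcal{B}^*\mathcal{J}$, with \emph{no} factor $\rho^\nu$. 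Rewriting $d(\mathcal{J}_*\nu)=\rho^\nu\,d\nu$ does not recover it: the same Jacobian appears symmetrically on both sides and cancels. So the ``bookkeeping'' you flag as the main obstacle is in fact where your sketch fails to land on the stated formula.

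The paper's argument diverges from yours at exactly this point. It never integrates against $\mathcal{J}_*\nu$; it starts from the $L^2(\nu)$ pairing $\langle\mathcal{B}^-f,g\rangle_\nu$ and, after the same forward–chain manipulation, arrives at $\langle\mathcal{B}^-f,g\rangle_\nu=\langle\mathcal{J}f,\mathcal{B}\mathcal{J}g\rangle_\nu$, whence $\mathcal{B}^-=\mathcal{J}^*\mathcal{B}^*\mathcal{J}$. The factor $\rho^\nu$ then enters only through the separate computation $\mathcal{J}^*=\rho^\nu\mathcal{J}$ of the $L^2(\nu)$–adjoint of the composition operator (and it is in this adjoint computation, not in the matching, that $\mathcal{J}^2=\mathrm{id}$ is actually used). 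In short: the paper pairs against $d\nu$ from the outset, leaving a single uncancelled Jacobian sitting inside $\mathcal{J}^*$, whereas your route—pairing against the law of $Y_i$—produces two Jacobians that annihilate each other. To reproduce the stated formula along your lines, replace your first display by $\langle\mathcal{B}^-f,g\rangle_\nu$ rather than $\int g\,(\mathcal{B}^-f)\,d(\mathcal{J}_*\nu)$, and then identify $\mathcal{J}^*$ as the paper does.
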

  \begin{proof}
Let $X$ be a random variable with probability distribution $\nu$. Let the function $f$ on $N^+$ be in the domain of $\mathcal{B}^-$ and consider, for $g\in \mathcal{H}$, the inner product
$$ \langle\mathcal{B}^-f, g\rangle= \int_{N^+}(\mathcal{B}^- f)(x)\overline{g(x)}\, d\nu(x) =\mathbb{E}\left[(\mathcal{B}^- f)(X)\overline{g(X)}\right]=\mathbb{E}\left[\mathbb{E}\left[f(Y_{i+1})|Y_i=X\right]\overline{g(X)}\right].$$
Now $Y_i= \mathcal{J}X_{n-i}$ and $Y_{i+1}= \mathcal{J}X_{n-i-1}$. Letting $l=n-i-1$, the rightmost term above becomes
$$\mathbb{E}\left[\left.\mathbb{E}\left[\overline{(g\circ \mathcal{J})(X_{l+1}))}(f\circ \mathcal{J})(X_l)\right| \mathcal{J}X_{l+1}=X \right]\right] =
\mathbb{E}\left[\overline{(g\circ \mathcal{J})(X_{l+1}))}(f\circ \mathcal{J})(X_l)\right].$$
We then  have
\begin{align*}\mathbb{E}\left[\overline{(g\circ \mathcal{J})(X_{l+1}))}(f\circ \mathcal{J})(X_l)\right]&=\mathbb{E}\left[\mathbb{E}\left[\left. \overline{(g\circ\mathcal{J})(X_{l+1})}\right|X_l=X\right] (f\circ \mathcal{J})(X)\right]\\
&=\mathbb{E}\left[\overline{\mathcal{B}_X(g\circ\mathcal{J})} (f\circ \mathcal{J})(X)\right]\\
&=\left\langle \mathcal{J}f, \mathcal{B}\mathcal{J} g\right\rangle.
\end{align*}
Therefore $\mathcal{B}^-= \mathcal{J}^*\mathcal{B}^*\mathcal{J}$.  Since
$\left\langle  \mathcal{J}f,g\right\rangle= \int_{N^+} (f\circ \mathcal{J})\overline{g}\, d\nu = \int_{N^+} f\overline{(g\circ \mathcal{J})}\rho^\nu\, d\nu=\left\langle f, \rho^\nu \mathcal{J} g\right\rangle$ we obtain $\mathcal{J}^*=\rho^\nu \mathcal{J}$, concluding the proof.
  \end{proof}
A similar argument shows that the process corresponding to the simple reversal $$(X_0, \dots, X_n)\mapsto (X_n, \dots, X_0)$$ has transition operator $\mathcal{B}^*$. Moreover $\nu\mathcal{B}^*=\nu$ since $\nu\mathcal{B}^*f= \langle\mathcal{B}^*f, 1\rangle=\langle f, \mathcal{B}1\rangle = \langle f, 1\rangle = \nu(f).$ On the other hand,  $\nu\mathcal{B}^-=\mathcal{J}_*\nu$ and it is not the case in general  that $\nu=\mathcal{J}_*\nu$, so $\nu$ may not be stationary with respect to $\mathcal{B}^-$.

 From the billiard map $\mathcal{B}$ and a stationary probability measure $\nu$ we define the probability measure $\eta\in \mathcal{P}(\mathcal{D})$ by $d\eta(x,y)=d\nu(x)\, d\mathcal{B}_x(y)$ and call $\eta$ the probability measure on {\em forward pairs}. The probability measure on {\em backward pairs} $\eta^-\in \mathcal{P}(\mathcal{D})$ is defined by $\eta^-=\mathcal{R}_*\eta$, where we recall that $\mathcal{R}(x,y)=(\mathcal{J}y,\mathcal{J}x)$. We assume that $\eta$ and $\eta^-$ are in the same measure class. 
  
  \begin{proposition} The measure $\eta^-$  satisfies
  $d\eta^-(x,y)=d\nu(x)\, d\mathcal{B}^-_x(y)$ for $(x,y)\in \mathcal{D}$. 
  \end{proposition}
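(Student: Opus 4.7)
The plan is to unpack the push-forward defining $\eta^-$, test against product functions, and recognize the resulting integral as exactly the one already evaluated in the proof of Proposition~\ref{Bminus}. For any bounded measurable $h$ on $\mathcal{D}$, the definition $\eta^-=\mathcal{R}_*\eta$ with $\mathcal{R}(x,y)=(\mathcal{J}y,\mathcal{J}x)$ immediately gives
$$\int_{\mathcal{D}} h(x,y)\, d\eta^-(x,y) \;=\; \int_{N^+}\int_{N^+_{\pi(\mathcal{T}(x))}} h(\mathcal{J}y,\mathcal{J}x)\, d\mathcal{B}_x(y)\, d\nu(x),$$
so the task is to show this equals $\int_{\mathcal{D}} h(x,y)\, d\nu(x)\, d\mathcal{B}^-_x(y)$.

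By a standard monotone class argument it suffices to verify the identity for product functions $h(x,y)=g(x)f(y)$ with $f,g$ bounded and real-valued. For such $h$ the right-hand side of the display factors as
$$\int_{N^+}(f\circ\mathcal{J})(x)\,\bigl[\mathcal{B}(g\circ\mathcal{J})\bigr](x)\, d\nu(x) \;=\; \langle \mathcal{J}f,\,\mathcal{B}\mathcal{J}g\rangle_{\mathcal{H}},$$
which is precisely the inner product evaluated along the way to Proposition~\ref{Bminus} and shown there to equal $\langle \mathcal{B}^-f,g\rangle_{\mathcal{H}}=\int g(x)(\mathcal{B}^-f)(x)\, d\nu(x)=\int g(x)f(y)\, d\nu(x)\, d\mathcal{B}^-_x(y)$. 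This matches the integral of $h$ against $d\nu(x)\,d\mathcal{B}^-_x(y)$, completing the verification on products and therefore, by monotone class, on all bounded measurable $h$ on $\mathcal{D}$.

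The only step that needs care, rather than routine calculation, is confirming that both candidate measures actually live on $\mathcal{D}$. For this one checks that $\mathcal{R}$ maps $\mathcal{D}$ into itself: $\pi(\mathcal{J}x)=\pi(\mathcal{T}(x))$ by definition of $\mathcal{J}=J\circ\mathcal{T}$, while $\pi(\mathcal{T}(\mathcal{J}y))=\pi(y)$ follows from time-reversibility of the geodesic flow together with the fact that $\mathcal{J}$ is an involution. These two identities show that the defining constraint $\pi(y)=\pi(\mathcal{T}(x))$ of $\mathcal{D}$ is preserved by $\mathcal{R}$, which in turn forces the kernel $\mathcal{B}^-_x$ to be supported in the fiber $\{y:(x,y)\in\mathcal{D}\}$, so that $d\nu(x)\,d\mathcal{B}^-_x(y)$ is well defined as a measure on $\mathcal{D}$. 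Once this geometric bookkeeping is in place, the proposition is an essentially immediate consequence of the operator identity established in Proposition~\ref{Bminus}.
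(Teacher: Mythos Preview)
Your proof is correct and follows essentially the same route as the paper: test the two measures on product functions $g(x)f(y)$, unfold the push-forward by $\mathcal{R}$, and reduce to the inner-product identity $\langle \mathcal{J}f,\mathcal{B}\mathcal{J}g\rangle=\langle \mathcal{B}^-f,g\rangle$ established in Proposition~\ref{Bminus}. Your extra paragraph verifying that $\mathcal{R}$ preserves $\mathcal{D}$ and the explicit monotone class remark are welcome additions that the paper leaves implicit.
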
 
  \begin{proof}
It suffices to show that the two measures, $\eta^-$ and the measure defined by the right-hand side of the equation, give the same integral on  functions of the form $f\times g: (x,y)\mapsto f(x)g(y)$ where $f$ and $g$ are bounded continuous functions. In fact,
$$\eta^-(f\times g)= (\mathcal{R}_*\eta)(f\times g)= \eta((f\times g)\circ \mathcal{R})=\eta((g\circ \mathcal{J})\times(f\circ \mathcal{J})).$$
The right-most term above is equal to
$$ \int_{\mathcal{D}} g(\mathcal{J}x)f(\mathcal{J}y) d\mathcal{B}_x(y)\, d\nu(x)=\left\langle \mathcal{J}g, \overline{\mathcal{B}\mathcal{J}f}\right\rangle=\left\langle \mathcal{J}^*\mathcal{B}^*\mathcal{J}g, \overline{f}\right\rangle=\left\langle\mathcal{B}^-g,\overline{f}\right\rangle.$$
Finally, the equality $\left\langle\mathcal{B}^-g,\overline{f}\right\rangle=\int_{\mathcal{D}}(f\times g)(x,y)\, d\mathcal{B}^-_x(y)\, d\nu(x)$
 concludes the proof. 
  \end{proof}

 Note that $\mathbb{P}_{[0,n]}$ and $\mathbb{P}^-_{[0,n]}$  are equivalent under the assumption that $\eta$ and $\eta^-$ are equivalent. 
            
 Observe that 
 $$ \frac{d\mathbb{P}^-_{[0,n]}}{d\mathbb{P}_{[0,n]}}(x_0, \dots, x_n)=\frac{d\mathcal{B}^-_{x_0}}{d\mathcal{B}_{x_0}}(x_1)\cdots
 \frac{d\mathcal{B}^-_{x_{n-1}}}{d\mathcal{B}_{x_{n-1}}}(x_n)=\frac{d\eta^-}{d\eta\ }(x_0,x_1)\cdots \frac{d\eta^-}{d\eta\ }(x_{n-1},x_n).$$
 Also note that, for any function $f$ on $\mathcal{D}$
 $$\int_{\mathcal{D}_{[0,n]}} f(x_i,x_{i+1})\, d\mathcal{B}_{x_{n-1}}(x_{n}) \cdots d\mathcal{B}_{x_0}(x_1)\, d\nu(x_0) =
 \int_{\mathcal{D}}f(x_i,x_{i+1})\, d\mathcal{B}_{x_i}(x_{i+1})\, d\nu(x_i)=\int_{\mathcal{D}} f\, d\eta.$$ 
 From these observations we immediately   obtain
\begin{equation}\label{H}H\left(\mathbb{P}_{[0,n]},\mathbb{P}^-_{[0,n]}\right)=-n\int_{\mathcal{D}}\log\left(\frac{d\eta^-}{d\eta}\right)\, d\eta.\end{equation}
   \begin{proposition}[Entropy production rate]\label{entropy_general}
   The entropy production rate for the random billiard system, under the assumption that the probabilities on pairs $\eta$ and $\eta^-$ are equivalent,  takes the form
   $$ e_p=\frac12\int_{\mathcal{D}} \left[d\eta -d\eta^-\right]\log\left(\frac{d\eta\ }{d\eta^-}\right).$$
   In particular, this expression shows that $e_p\geq 0$. 
   \end{proposition}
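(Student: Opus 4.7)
The plan is to start from the already-derived identity (\ref{H}) for the relative entropy and symmetrize the resulting integral using the involutive structure of the reversal map $\mathcal{R}$. Dividing (\ref{H}) by $n$ produces a quantity constant in $n$, so the limit defining $e_p$ is immediate:
\begin{equation*}
e_p \;=\; \int_{\mathcal{D}} \log\!\left(\frac{d\eta}{d\eta^-}\right)\, d\eta.
\end{equation*}
It then remains to show that this single integral equals the symmetric expression $\tfrac{1}{2}\int_{\mathcal{D}}[d\eta-d\eta^-]\log(d\eta/d\eta^-)$, after which non-negativity is elementary.

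The key structural input is that $\mathcal{R}:\mathcal{D}\to\mathcal{D}$ is an involution. This reduces to verifying $\mathcal{J}^2=\mathrm{id}$ on $N^+$, which follows from the time-reversal symmetry of the geodesic flow: if $\mathcal{T}(x)=(q',v')$, then the geodesic issuing from $(q',-v')$ retraces the original trajectory, so $\mathcal{T}(\mathcal{J}(x))=J(x)$ and hence $\mathcal{J}^2(x)=x$. Granting this, $\mathcal{R}_*\eta^-=\mathcal{R}_*\mathcal{R}_*\eta=\eta$, and a standard Radon-Nikodym computation applied to the pushforward relations $\mathcal{R}_*\eta=\eta^-$ and $\mathcal{R}_*\eta^-=\eta$ yields $(d\eta/d\eta^-)\circ\mathcal{R}=d\eta^-/d\eta$, equivalently $\log(d\eta/d\eta^-)\circ\mathcal{R}=-\log(d\eta/d\eta^-)$.

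Applying the change of variables $\int f\, d\eta^-=\int f\circ\mathcal{R}\, d\eta$ with $f=\log(d\eta/d\eta^-)$ then gives
\begin{equation*}
\int_{\mathcal{D}}\log(d\eta/d\eta^-)\, d\eta^- \;=\; -\int_{\mathcal{D}}\log(d\eta/d\eta^-)\, d\eta \;=\; -e_p,
\end{equation*}
so averaging the two expressions $e_p=\int\log(d\eta/d\eta^-)\, d\eta$ and $-e_p=\int\log(d\eta/d\eta^-)\, d\eta^-$ produces the symmetric formula claimed in the proposition. For non-negativity, writing $\rho:=d\eta/d\eta^-$ turns the integrand into $(\rho-1)(\log\rho)\, d\eta^-$, which is pointwise $\ge 0$ by the elementary inequality $(\rho-1)\log\rho\ge 0$ for $\rho>0$. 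The only genuine obstacle is the geometric verification that $\mathcal{R}$ preserves $\mathcal{D}$ and is an involution; everything else is a routine manipulation of Radon-Nikodym derivatives.
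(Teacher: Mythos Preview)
Your proof is correct and follows essentially the same route as the paper: start from the identity (\ref{H}) to get $e_p=\int_{\mathcal{D}}\log(d\eta/d\eta^-)\,d\eta$, use the involutive property of $\mathcal{R}$ to obtain $(d\eta/d\eta^-)\circ\mathcal{R}=d\eta^-/d\eta$, change variables to see that $\int\log(d\eta/d\eta^-)\,d\eta^-=-e_p$, and average. Your explicit verification that $\mathcal{J}^2=\mathrm{id}$ (and hence that $\mathcal{R}$ is an involution) and your pointwise inequality $(\rho-1)\log\rho\ge 0$ for non-negativity are details the paper leaves implicit, but the argument is otherwise the same.
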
   
      \begin{proof}
      Due to Equation (\ref{H})   we have $e_p=-\int_{\mathcal{D}}\log\left(\frac{d\eta^- }{d\eta\ }\right)\, d\eta$.  Now observe that
      $\frac{d\eta^-}{d\eta\ }\circ\mathcal{R}=\frac{d\eta\ }{d\eta^-}.$ In fact, for any measurable set $E\subset\mathcal{D}$,
      $$ \int_E \frac{d\eta^-}{d\eta\ }\circ\mathcal{R}\, d\eta^- = \int_{\mathcal{R}(E)} \frac{d\eta^-}{d\eta\ }\, d(\mathcal{R}_*\eta^-) =\int_{\mathcal{R}(E)}\, d\eta^-=\eta(E)=\int_E \frac{d\eta\ }{d\eta^-}\, d\eta^-.$$
      Therefore,
      $$e_p=-\int_{\mathcal{D}}\log\left(\frac{d\eta^- }{d\eta\ }\right)\, d\eta=  -\int_{\mathcal{D}} \log\left(\frac{d\eta\ }{d\eta^-}\right)\, d\eta^-$$
      from which we conclude that $e_p=\frac12\int_{\mathcal{D}} \left[d\eta -d\eta^-\right]\log\left(\frac{d\eta\ }{d\eta^-}\right)$ as claimed. It is apparent from this expression that $e_p\geq 0$.
      \end{proof}

\section{Second Law of Thermodynamics}\label{sec:Second Law of Thermodynamics}

    The reciprocity property imposed on the reflection operator $P$, which is needed 
    in order to make sense of  the concept of boundary temperature,
     has not been used so far. Below, we rewrite  the expression for $e_p$ obtained in the previous section,  making use of this property.  Before doing 
     so,      it  is useful to 
introduce  a generalized but natural notion of reciprocity as noted after   Definition \ref{reciprocity}.     
  This yields a more general  notion of  reflection operator  that applies to manifolds having a local product structure, corresponding to billiard systems consisting of multiple rigid masses. In such cases we suppose the existence of a measurable family of subspaces $W_q\subset T_qM$ for each $q\in \partial M$ such that 
$\mathbbm{n}_q\in W_q$,  and define the Maxwellian  $\mu_q^\pm$ as in Equation (\ref{maxwellian}), except that the dimension $n$ is now  replaced with the dimension of $W_q$ and $dV_q$ is replaced with the volume measure on $W_q$.   We then assume that the family of operators $P_q$ satisfies
\begin{enumerate}
\item $u\in W_q^-\mapsto \mathcal{P}(W_q^+)$, where $W_q^\pm :=W_q\cap N^\pm_q$;
\item if $u\in T_q(\partial M)$ is perpendicular to $W_q$, then $P_{(q,u)}$ is the point mass at $u$;
\item reciprocity is defined for the measure $d\zeta_q(u,v)$ for $(u,v)\in W_q^-\times W_q^+$. 
\end{enumerate}
\begin{definition}\label{thermal direction}
The subspaces $W_q$ will be called {\em directions of thermal contact} or simply {\em thermal directions}. The boundary of $M$ is said to have temperature $T$ at $q\in \partial M$ if 
$P_q$ satisfies reciprocity with respect to the Maxwellian on $W_q^-$ having parameter $\beta=1/\kappa T$.
\end{definition}

\begin{wrapfigure}{R}{0.4\textwidth}
\begin{center}
\includegraphics[width=0.35\textwidth]{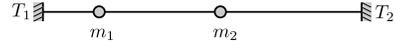}\ \ 
\caption{{\small The two-masses system. }}
\label{twomasses}
\end{center}
\end{wrapfigure}

Thus  if $u=u_1+u_2\in N_q^-$ is a pre-collision velocity decomposed into $u_1\in W_q^-$ and $u_2$ in the orthogonal complement   $W_q^\perp$ of $W_q$, then
the post-collision velocity is $U_1+u_2$ where $U_1$ is a random vector in $W^+_q$ distributed according to $P_{(q,u_1)}$, and reciprocity holds with respect to a Maxwellian on the space of thermal directions.

A   simple example will help to clarify the need for the above notion of thermal directions. Consider the system shown in  Figure \ref{twomasses}
describing  two point masses $m_1, m_2$ that can slide freely over an interval of length $l$.  
When the two masses collide with each other, their post-collision velocities are derived from the assumptions of conservation of kinetic energy and momentum 
and when they collide with the end-points of the interval, they reflect according to random reflection operators at temperatures $T_1$ and $T_2$.

The configuration space of the pair of masses is a right-triangle with the sides adjacent to the right angle having length $l$. A point $(x,y)$ represents the configuration in which $m_1$ is at $x$ and $m_2$ is at $y$. On the  longer side are the  configurations representing collisions of the two masses. Introducing new coordinates
$x_1=\sqrt{\frac{m_1}{m}}x, \ \ \ x_2=\sqrt{\frac{m_2}{m}}y,$
where $m=m_1+m_2$,
the total kinetic energy  becomes a multiple of the square Euclidian norm,  $|v|^2$ of the velocity vector $v=(\dot{x}_1, \dot{x}_2)$.

In this  rescaled picture, the two-masses system becomes a random  billiard system    in which  
a point particle of mass $m=m_1+m_2$   moves freely inside the triangle and undergoes specular reflection on the hypotenuse while reflection on the shorter sides is random.  On these sides the thermal direction is along the normal vector $\mathbbm{n}_q$ and their temperature is $T_i$, $i=1,2$.

\begin{wrapfigure}{L}{0.4\textwidth}
\begin{center}
\includegraphics[width=0.35\textwidth]{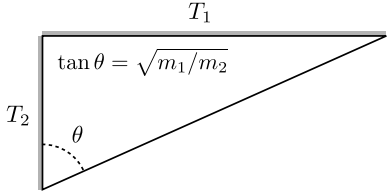}\ \ 
\caption{{\small Configuration manifold for the two-masses random billiard system. }}
\label{triangle_twomasses}
\end{center}
\end{wrapfigure}

As this example makes clear (see also the idealized heat engine given in Section \ref{sec:Efficiency and work production}),  the moving particle in our definition of  random billiards can be thought in general to represent the configuration 
of a system consisting of several moving rigid masses, possibly extended rigid bodies in dimension $3$. In such cases the configuration manifold can be a non-flat  Riemannian manifold.

In preparation for the proof of Theorem \ref{heat entropy} let us recall  that the measures $\mu^\pm\in \mathcal{P}(N^\pm)$ were defined as 
   $$d\mu^\pm(x)=d\mu_q^\pm(u)\, d\bar{A}(q)$$
   where $\bar{A}=A/A(\partial M)$ is the normalized Riemannian volume measure on $\partial M$. 
    When  the space $W_q$ of thermal directions is not all of $N_q$, we let  $d\mu_q^\pm(u)=d\overline{\mu}_q(u_1)\, dS^{\perp}_q(u_2)$. This is the product measure of the Maxwellian along $W^\pm_q$ and the normalized volume measure on the 
    hemisphere of radius $|u_2|_q$, where $u=u_1+u_2$ is the orthogonal decomposition of $u$ into its $W_q$ and $W_q^\perp$ components.

   \begin{proposition}\label{entropy_reciprocity}
   With the  definitions from Section \ref{sec:Basic facts}, and bringing into play the reciprocity property of  the reflection operator $P$, we obtain
\begin{equation}\label{prop_equation} \frac{d\eta\ }{d\eta^-}(x,y)=\frac{d\nu}{d(\mathcal{J}_*\mu^+)}(x)\left[\frac{d\nu}{d(\mathcal{J}_*\mu^+)}(\mathcal{J}y)\right]^{-1}\end{equation}
    where $x=(q,u)$.
   \end{proposition}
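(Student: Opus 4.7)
The plan is to exploit reciprocity in its most natural setting---on pairs $(x_1,y)\in N^-\times N^+$ with $\pi(x_1)=\pi(y)$---and then transport the result back to $\mathcal{D}$. I would first introduce the bijection $\tau:\mathcal{D}\to\bar{\mathcal{D}}$ defined by $\tau(x,y):=(\mathcal{T}(x),y)$, under which $\eta$ becomes $d\bar\eta(x_1,y)=d\nu^-(x_1)\,dP_{x_1}(y)$. A brief computation using the time-reversal identity $\mathcal{T}\mathcal{J}=J$ (equivalent to $\mathcal{J}^2=\mathrm{id}$) shows that $\mathcal{R}$ is conjugated by $\tau$ to the involution $\bar{\mathcal{R}}(x_1,y):=(Jy,Jx_1)$ on $\bar{\mathcal{D}}$, so $\tau_*\eta^-=\bar{\mathcal{R}}_*\bar\eta$.

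Next I would invoke reciprocity in the form that the measure $dm_0(x_1,y):=d\mu^-(x_1)\,dP_{x_1}(y)$ on $\bar{\mathcal{D}}$ is $\bar{\mathcal{R}}$-invariant---this is the integrated version of Definition~\ref{reciprocity} and continues to hold under the thermal-directions generalization discussed above. Writing $d\bar\eta=\phi\,dm_0$ with $\phi:=d\nu^-/d\mu^-$, the involutive property of $\bar{\mathcal{R}}$ together with the invariance of $m_0$ yield $d(\bar{\mathcal{R}}_*\bar\eta)=(\phi\circ\bar{\mathcal{R}})\,dm_0$, whence
$$\frac{d\eta\ }{d\eta^-}(x,y)=\frac{d\bar\eta\ }{d(\bar{\mathcal{R}}_*\bar\eta)}(\tau(x,y))=\frac{\phi(\mathcal{T}(x))}{\phi(Jy)}.$$

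To match the statement I would identify this ratio with that of $\rho:=d\nu/d(\mathcal{J}_*\mu^+)$. The key fact is $\mathcal{T}^{-1}_*\mu^-=\mathcal{J}_*\mu^+$, which follows from the time-reversal relation $\mathcal{T}^{-1}=J\mathcal{T}J$ together with $J_*\mu^-=\mu^+$ (the Maxwellian density is an even function of velocity). The chain rule for Radon--Nikodym derivatives under the bijection $\mathcal{T}$ then gives
$$\phi(\mathcal{T}(x))=\frac{d(\mathcal{T}_*\nu)}{d\mu^-}(\mathcal{T}(x))=\frac{d\nu}{d(\mathcal{T}^{-1}_*\mu^-)}(x)=\frac{d\nu}{d(\mathcal{J}_*\mu^+)}(x)=\rho(x);$$
applying the identity $\phi\circ\mathcal{T}=\rho$ with $x$ replaced by $\mathcal{J}y\in N^+$ and using $\mathcal{T}(\mathcal{J}y)=Jy$ yields $\phi(Jy)=\rho(\mathcal{J}y)$, and substitution into the previous display gives (\ref{prop_equation}).

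The main obstacle I expect is purely the bookkeeping of involutions---tracking the interplay among $J$, $\mathcal{T}$, $\mathcal{J}$ and their inverses, together with the three measures $\mu^\pm$, $\mathcal{J}_*\mu^+$, $\mathcal{T}^{-1}_*\mu^-$, as objects are shuttled between $N^+\times N^+$ and $N^-\times N^+$. Once the change of variables $\tau$ is in place, the two physical inputs---boundary reciprocity and time-reversibility of the geodesic flow---separate cleanly, and what remains is a short Radon--Nikodym manipulation.
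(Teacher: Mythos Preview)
Your argument is correct and uses the same two physical inputs as the paper---boundary reciprocity and the time-reversal symmetry $\mathcal{T}^{-1}=J\mathcal{T}J$ of the return map---but the packaging is different. The paper stays on $\mathcal{D}$ throughout: it writes $d\eta^-(x,y)=d\nu(\mathcal{J}y)\,dP_{Jy}(\mathcal{J}x)$ (implicitly using $\mathcal{T}\mathcal{J}=J$), inserts $d\nu(\mathcal{J}y)=\frac{d(\mathcal{J}_*\nu)}{d\mu^+}(y)\,d\mu^+(y)$, applies reciprocity in the form $d\mu^+(y)\,dP_{Jy}(\mathcal{J}x)=d\mu^+(\mathcal{J}x)\,dP_{\mathcal{T}(x)}(y)$, and then changes $d\mu^+(\mathcal{J}x)$ back to $d\nu(x)$. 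Your route instead transports everything via $\tau(x,y)=(\mathcal{T}(x),y)$ to the ``local'' space $\bar{\mathcal{D}}\subset N^-\times N^+$, where reciprocity becomes the single clean statement that the reference measure $m_0$ is $\bar{\mathcal{R}}$-invariant; the Radon--Nikodym computation then reduces to the one-line identity $d(\bar{\mathcal{R}}_*\bar\eta)=(\phi\circ\bar{\mathcal{R}})\,dm_0$. This separates the roles of reciprocity (a pointwise symmetry at the collision) and geodesic time-reversal (encoded entirely in $\tau$ and the identity $\phi\circ\mathcal{T}=\rho$), which is arguably tidier; the paper's direct computation, on the other hand, avoids introducing the auxiliary space and map. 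Both arrive at \eqref{prop_equation} with equal rigor.
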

  \begin{proof}
First observe that $\frac{d(\mathcal{J}_*\nu)}{d\mu^+}\circ \mathcal{J}=\frac{d\nu}{d(\mathcal{J}_*\mu^+)}$. In fact, for any measurable subset $E\subset N^+$,
$$\int_E\frac{d(\mathcal{J}_*\nu)}{d\mu^+}(\mathcal{J}x)\, d(\mathcal{J}_*\mu^+)(x)=\int_{\mathcal{J}(E)}\frac{d(\mathcal{J}_*\nu)}{d\mu^+}(x)\, d\mu^+(x)=\int_{\mathcal{J}(E)}d(\mathcal{J}_*\nu)=\int_E d\nu.$$
The identity then follows from  $\int_E d\nu=\int_{N^+} \frac{d\nu}{d(\mathcal{J}_*\mu^+)}(x)\, d(\mathcal{J}_*\mu^+)(x).$

Proceeding with the proof of  Equation (\ref{prop_equation}), we first recall that
$$
d\eta(x,y)= d\nu(x)\, d\mathcal{B}_x(y)=d\nu(x)\, dP_{\mathcal{T}(x)}(y)\ \text{ and }\
d\eta^-(x,y)=d\eta(\mathcal{J}y,\mathcal{J}x)=d\nu(\mathcal{J}y)\, dP_{Jy}(\mathcal{J}x).
$$
Reciprocity was defined by the relation
$$ d\mu^-_q(Jy)\, dP_{Jy}(\mathcal{J}x) = d\mu^-_{q'}(\mathcal{T}(x))\, dP_{\mathcal{T}(x)}(y).$$
(See Figure \ref{billiard_definitions}.) 
Using  the measures $\mu^\pm$  on $N^\pm$ defined just prior to the statement of this proposition, we may rewrite the reciprocity property as
$$ d\mu^-(Jy)\, dP_{Jy}(\mathcal{J}x) = d\mu^-(\mathcal{T}(x))\, dP_{\mathcal{T}(x)}(y).$$

\begin{figure}[htbp]
\begin{center}
\includegraphics[width=3.0in]{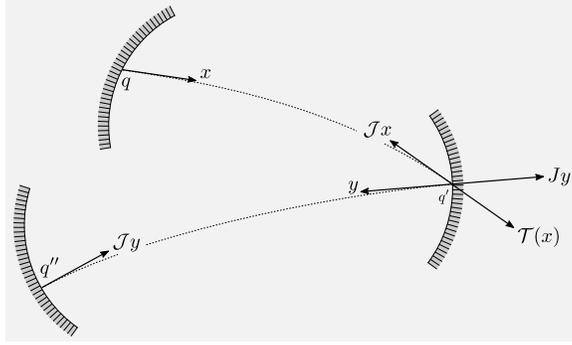}\ \ 
\caption{{\small Depiction of some of the vectors appearing in the proof of Proposition \ref{entropy_reciprocity}.  }}
\label{billiard_definitions}
\end{center}
\end{figure}  

Also notice that $d\mu^\pm(Jy)=d(J_*\mu^\pm)(y)=d\mu^\mp(y)$ and $d\mu^-(\mathcal{T}(x))= d\mu^+(\mathcal{J}x)$. 
Therefore, 
\begin{align*}
d\eta^-(x,y)&=\frac{d(\mathcal{J}_*\nu)}{d\mu^+}(y)\, d\mu^+(y)\, dP_{Jy}(\mathcal{J}x)\\
&=\frac{d(\mathcal{J}_*\nu)}{d\mu^+}(y)\, d\mu^+(\mathcal{J}x)\, dP_{\mathcal{T}(x)}(y)\\
&=\frac{d(\mathcal{J}_*\nu)}{d\mu^+}(y)\, \frac{d(\mathcal{J}_*\mu^+)}{d\nu}(x) \, d\nu(x)\, dP_{\mathcal{T}(x)}(y)\\
&= \frac{d(\mathcal{J}_*\nu)}{d\mu^+}(y)\, \frac{d(\mathcal{J}_*\mu^+)}{d\nu}(x) \, d\eta(x,y).
\end{align*}
This, in combination with the observation that began the proof, yields Equation (\ref{prop_equation}).
  \end{proof}
  
The factorization of the Radon-Nikodym derivative $d\eta/d\eta^-$ as a product of a function of $x$ and a function of $y$, as given in Proposition \ref{entropy_reciprocity},      
allows us to express $e_p$ as an integral over $N^+$ rather than $\mathcal{D}$. This is indicated in the following proposition.

\begin{proposition}\label{variant ep}
Define the function $\Lambda:N^+\rightarrow (0,\infty)$ given by $\Lambda(x)=\frac{d\nu}{d(\mathcal{J}_*\mu^+)}(x)$. Then, given a stationary probability measure $\nu$ of the random billiard system,
$$e_p=\frac12\nu\left(\log\frac{\Lambda}{\Lambda\circ\mathcal{J}}\right)$$
under the assumption   that $\eta$ and $\eta^-$ are equivalent measures.
\end{proposition}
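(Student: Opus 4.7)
\textbf{Proof plan for Proposition \ref{variant ep}.} The plan is to plug the factorized Radon-Nikodym derivative from Proposition~\ref{entropy_reciprocity} into the symmetric formula for $e_p$ from Proposition~\ref{entropy_general} and exploit the marginals of $\eta$ and $\eta^-$ to collapse an integral over $\mathcal{D}$ into an integral over $N^+$.

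First I would recall that by Proposition~\ref{entropy_general}
\begin{equation*}
  e_p = \frac{1}{2}\int_{\mathcal{D}} \log\!\left(\frac{d\eta}{d\eta^-}\right)\,[\,d\eta - d\eta^-\,],
\end{equation*}
and by Proposition~\ref{entropy_reciprocity} together with the definition of $\Lambda$,
\begin{equation*}
  \log\!\left(\frac{d\eta}{d\eta^-}\right)(x,y) \;=\; \log\Lambda(x) \;-\; \log\Lambda(\mathcal{J}y).
\end{equation*}
Substituting this expression splits $e_p$ into four integrals of the form $\int f(x)\,d\eta$, $\int g(y)\,d\eta$, $\int f(x)\,d\eta^-$, and $\int g(y)\,d\eta^-$.

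Next I would identify the four marginals that are needed. Since $d\eta(x,y)=d\nu(x)\,d\mathcal{B}_x(y)$ and $d\eta^-(x,y)=d\nu(x)\,d\mathcal{B}^-_x(y)$, the first marginal of each of $\eta$ and $\eta^-$ is $\nu$. Stationarity $\nu\mathcal{B}=\nu$ gives that the second marginal of $\eta$ is also $\nu$, while the identity $\nu\mathcal{B}^- = \mathcal{J}_*\nu$ recorded after Proposition~\ref{Bminus} gives that the second marginal of $\eta^-$ is $\mathcal{J}_*\nu$. Using these, three of the four integrals become $\nu(\log\Lambda)$ or $\nu(\log\Lambda\circ\mathcal{J})$ immediately; for the fourth I would note that $\mathcal{J} = J\circ\mathcal{T}$ is an involution (reversing the post-collision velocity runs the geodesic back to its prior point of collision), so
\begin{equation*}
  \int \log\Lambda(\mathcal{J}y)\,d\eta^-(x,y) \;=\; \int \log\Lambda(\mathcal{J}y)\,d(\mathcal{J}_*\nu)(y) \;=\; \int \log\Lambda(y)\,d\nu(y) \;=\; \nu(\log\Lambda).
\end{equation*}

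Finally, assembling the four pieces with the correct signs yields
\begin{equation*}
  e_p \;=\; \tfrac{1}{2}\bigl[\,\nu(\log\Lambda) - \nu(\log\Lambda\circ\mathcal{J}) - \nu(\log\Lambda) + \nu(\log\Lambda)\,\bigr] \;=\; \tfrac{1}{2}\,\nu\!\left(\log\frac{\Lambda}{\Lambda\circ\mathcal{J}}\right),
\end{equation*}
as claimed. The steps here are all routine bookkeeping; the one point that must be verified carefully, and which is the only nontrivial ingredient, is the involution property $\mathcal{J}\circ\mathcal{J} = \mathrm{id}$ used to handle the fourth integral. Everything else is just the factorization of $d\eta/d\eta^-$ together with the description of the marginals of $\eta$ and $\eta^-$.
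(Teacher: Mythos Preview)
Your proposal is correct and follows essentially the same route as the paper's proof: both substitute the factorization from Proposition~\ref{entropy_reciprocity} into the symmetric formula of Proposition~\ref{entropy_general}, use that the first marginal of both $\eta$ and $\eta^-$ is $\nu$ (the paper's Equation~(\ref{simplify})) to kill the $\log\Lambda(x)$ contribution, and then reduce the remaining $\log\Lambda(\mathcal{J}y)$ terms to integrals against $\nu$ via stationarity and the involution $\mathcal{J}\circ\mathcal{J}=\mathrm{id}$. The only cosmetic difference is that the paper handles your ``fourth integral'' by pulling back through $\eta^-=\mathcal{R}_*\eta$ directly rather than passing through the second marginal $\mathcal{J}_*\nu$, but this is the same computation and likewise rests on $\mathcal{J}$ being an involution.
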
   
 \begin{proof} 
 Observe that for any $\nu$-integrable function $f$ on $N^+$,
\begin{equation}\label{simplify}\int_{\mathcal{D}} f(x) \, d\eta^\pm(x,y)= \int_{N^+} \left[\int_{N^+_{\pi(\mathcal{T}(x))}} d\mathcal{B}^\pm_x(y)\right] f(x)\, d\nu(x)=\nu(f). \end{equation}  
Using  Proposition \ref{entropy_reciprocity} and  the general expression for the entropy production rate given in Proposition \ref{entropy_general} we obtain
 \begin{align*}
 e_p&=\frac12\int_{\mathcal{D}}\left[d\eta(x,y)-d\eta^-(x,y)\right] \log\left(\frac{\Lambda(x)}{\Lambda(\mathcal{J}y)}\right)\\
 &=-\frac12\int_{\mathcal{D}}\left[d\eta(x,y)-d\eta^-(x,y)\right] \log\left(\Lambda(\mathcal{J}y)\right)
 \end{align*}
 where the term involving $\Lambda(x)$ vanished due to Equation (\ref{simplify}). Also observe 
 that 
 $$\int_{\mathcal{D}}\log\left(\Lambda(\mathcal{J}y)\right)\, d\eta^-(x,y)=\int_{\mathcal{D}} \log\left(\Lambda(x)\right)\, d\eta(x,y)=\nu\left(\log \Lambda\right)$$
and  that
$$\int_\mathcal{D}\log\left(\Lambda(\mathcal{J}y)\right)\, d\eta(x,y)=\int_{N^+} \left[\int_{N^+_{\pi(\mathcal{T}(x))}} \log\left(\Lambda(\mathcal{J}y)\right)\,
d\mathcal{B}_x(y)\right]\, d\nu(x)=\nu\mathcal{B}(\log\Lambda\circ\mathcal{J})=\nu(\log\Lambda\circ\mathcal{J}).$$
 Collecting the terms we obtain $e_p=\frac12 \nu\left(\log\left(\Lambda/\Lambda\circ \mathcal{J}\right)\right)$ as claimed.
 \end{proof}

    Let $m$ be the measure on $N^+$ defined by $dm(q,u)=dV_q(u)\, d\bar{A}(q)$ and $g(q,u)$ the density of $\nu$ with respect to $m$.  Notice that

    $$ d\mu^+(q,u)=C_q\exp\left\{-\beta(q)E(q,u)\right\}\, dm(q,u)$$
  where $E$ is the sum of the kinetic and potential energy functions: $E(q,u) = E_0(q,u)+\Phi(q)$, $E_0(q,u)=\frac12 m|u|^2_q$. 
  Furthermore, invariance of $m$ under $\mathcal{J}$ gives
  $$d(\mathcal{J}_*\mu^+)(x) = \frac{h(\mathcal{J}x)}{h(x)}\, d\mu^+(x) = h(\mathcal{J}x)\, dm(x)$$
  where $h(x)=h(q,u)=C_q\exp\left\{-\beta(q) E(q,u)\right\}$.  Define 
  $ l(x,y)=g(x)h(\mathcal{J}y)$
  for $(x,y)\in \mathcal{D}$. These definitions  give the expression
\begin{equation}\label{ratio eta}\frac{d\eta\ }{d\eta^-}(x,y) = \frac{\Lambda(x)}{\Lambda(\mathcal{J}y)} = \frac{g(x)}{h(\mathcal{J}x)}\frac{h(\mathcal{J}y)}{g(y)} = \frac{l(x,y)}{(l\circ \mathcal{R})(x,y)}, \end{equation}
where $l(x,y)=g(x) h(\mathcal{J}y)$.
 These definitions will be used in the proof of the Second Law. 

 \begin{proof}[Proof of Theorem \ref{heat entropy}]
The expression in the theorem statement could be derived by taking as starting point the expression for $e_p$ given in Proposition \ref{variant ep}, but we prefer to 
begin with the general form for $e_p$ asserted in Proposition \ref{entropy_general} and the function $l(x,y)$ appearing in (\ref{ratio eta}). In fact, we start with the non-symmetrized expression for $e_p$ and use (in the fifth line) the property expressed in Equation (\ref{simplify}):
\begin{align*}
e_p &= \int_{\mathcal{D}}d\eta(x,y)\log \frac{d\eta}{d\eta^-}(x,y)\\
&= \int_{\mathcal{D}}d\eta(x,y)\log \frac{l(x,y)}{l(\mathcal{R}(x,y))}\\
&=\int_{\mathcal{D}}\left[d\eta(x,y)-d\eta^-(x,y)\right] \log l(x,y)\\
&=\int_{\mathcal{D}}\left[d\eta(x,y)-d\eta^-(x,y)\right] \log g(x) - \int_{\mathcal{D}}\left[d\eta(x,y)-d\eta^-(x,y)\right] \log h(\mathcal{J}y)\\
&=- \int_{\mathcal{D}}\left[d\eta(x,y)-d\eta^-(x,y)\right] \log h(\mathcal{J}y).
\end{align*}
Equation (\ref{simplify}), again, implies
$$ \int_{\mathcal{D}} d\eta^-(x,y)\log h(\mathcal{J}y)= \int_{\mathcal{D}} d\eta(x,y) \log(h(x))= \nu(\log(h)).$$
And stationarity of $\nu$ implies
$$  \int_{\mathcal{D}} d\eta(x,y)\log h(\mathcal{J}y)=\int_{N^+}d\nu(x) \mathcal{B}_x(\log(h\circ\mathcal{J})) = \nu\mathcal{B}(\log(h\circ \mathcal{J})) = \nu(\log(h\circ \mathcal{J})).$$
Therefore
$$ e_p=\int_{N^+} \log\left[\frac{h(x)}{h(\mathcal{J}(x))}\right]\, d\nu(x).$$
Now observe that  $\mathcal{J}_*\nu = J_*\mathcal{T}_*\nu = J_*\nu^-$, which allows us to write
$$ e_p=\int_{N^+}\left[d\nu^+(x)-d(J_*\nu^-)(x)\right] \log(h(x)).$$
Note that the kinetic energy function $E_0$ is invariant under $J$ and that $$\log h(x) = \log C_q - \beta(q) E(x).$$ The integral of the
constant term $\log C_q$ against the difference of probability measures $\nu^+_q- J_*\nu^-_q$ gives $0$, so we are left with
$$e_p =-  \int_{N^+}\left[d\nu^+(x) - d(J_*\nu^-)(x)\right](\beta E)(x).  $$
Decomposing along the fibers of $\pi:N^+\rightarrow \partial M$ gives the desired expression.
 \end{proof}

\section{Multi-temperature Maxwell-Smoluchowski systems}\label{sec:Multi-temperature Maxwell-Smoluchowski systems}

The central purpose of this section is to study the entropy production rate for a system whose configuration space $M$ is a Riemannian manifold, with some mild regularity conditions, whose boundary is partitioned into components $\Gamma_i$ which are kept at temperatures $T_i$ respectively for $i = 1,\ldots, N$.  The notion of a thermostatted boundary is modeled using the Maxwell-Smoluchowski model, defined Definition \ref{def:ms-model}.  Recall that the model can be thought of as follows.  Let constants $\alpha_1,\ldots,\alpha_N \in (0,1]$ be given.  Upon collision at any point of boundary component $i$, the post-collision velocity of the colliding particle is either chosen randomly according to the Maxwell-Boltzmann distribution with temperature $T_i$, with independent probability $\alpha_i$, or the particle reflects specularly with probability $1-\alpha_i$.  When $\alpha_i$ is small, this model may be regarded as a random perturbation of an ordinary billiard system.  More generally, it can be thought of as a model of thermalization, where the particle only takes on the temperature of the boundary thermostat after a geometrically distributed number of collisions.

The proof that the Maxwell-Smoluchowski model satisfies reciprocity amounts to the following elementary exercise.
(In order to alleviate clutter, we omit 
the subscript $q$ from maps and measures.)
 For proving invariance of  $\zeta$ under $\mathcal{R}$ it is sufficient to test the equality $\mathcal{R}_*\zeta=\zeta$ on   functions of the form $(f\times g)(u,v)=f(u)g(v)$.
 For such functions, 
\begin{align*}\zeta(f\times g)-\left(\mathcal{R}_*\zeta\right)(f\times g)&= \zeta\left(f\times g\right) - \zeta\left((f\times g)\circ\mathcal{R}\right)\\
&=\int_{N^-\times N^+}\left[f(u)g(v)- f(Jv)g(Ju)\right]dP_u(v) d\mu^-(u) \\
&=\alpha \mu^-(f)\mu^+(g) + (1-\alpha) \mu^-\left(f (g\circ \text{Ref})\right)\\
&\ \ \ \ \ \ \ \ \ \ \ \ \ - \left[\alpha \mu^-(g\circ J)\mu^+(f\circ J) + (1-\alpha) \mu^-\left((g\circ J) (f\circ J\circ \text{Ref})\right)\right].
\end{align*}
This last expression is seen to be  $0$ because 
$J_*\mu^\pm = \text{Ref}_*\mu^\pm = \mu^\mp $
and  the flip and reflection maps commute. (Note  that $\mu^+(f\circ J)=(J_*\mu^+)(f)=\mu^-(f)$.)

\subsection{Uniform ergodicity}\label{uniform ergodicity subsection}

Our present aim  is to study the Markov chain $X_n$ on state space $N^+$ with Markov transition kernel $\mathcal B$.
In the remainder of this section  we restrict ourselves to the case in
which $\mathcal{B}$ is induced by the Maxwell-Smoluchowski  reflection operator,
the billiard system is free of potential forces, and the bundle $W$ of thermal directions is all of $N$.

Before turning to the chain $X_n$, we  introduce a related Markov chain derived from the projection of $X_n$ onto the bundle of unit vectors in $N^+$.
Recall that $S \subset N $ denotes the bundle of unit vectors in $N$, $S^+=S \cap N^+$, and $S^+_q=S^+\cap N_q$.  The hemisphere bundle $S^+$  is invariant under the standard billiard map, which is defined as the composition of the translation map $\mathcal T$ and specular reflection. This is clear since $\mathcal T$ and the specular reflection map preserve the Riemannian norm.  The billiard measure $\sigma$  on $S^+$, introduced at the end of Section \ref{sec:main}, is the  measure invariant under the standard  billiard map,  obtained from the symplectic form as described, for example,  in \cite{cook}.  Recall that
$$d\sigma(q,u)= C\left\langle u, \mathbbm{n}_q\right\rangle\,  dV_q^{\text{\tiny S}}(u)\, dA(q), $$
where $A$ is the Riemannian $(n-1)$-dimensional volume measure on $\partial M$ and $V_q^{\text{\tiny S}}$ is the Riemannian $(n-1)$-dimensional volume measure on $S_q^+$. 

One property of the Maxwell-Smoluchowski reflection operator that should be highlighted is that it is {\em projective} according to the following definition.  We denote by $\Pi_q:N_q^+\setminus \{0\}\rightarrow S_q^+$ the projection map $\Pi_q u=u/|u|_q$ and by $\Pi$ the corresponding projection map from $N^+$ minus the zero section onto $S^+$.

\begin{definition}[Projective reflection operators]
We say that the reflection operator $P$ is {\em projective} if for all nonzero $x=(q, u)\in N^-$,   $\lambda>0$, and continuous $f:S_q^+\rightarrow \mathbb{R}$, the integral $ P_{(q,\lambda u)}(f\circ \Pi_q)$ does not depend on $\lambda$. 
\end{definition}

That the Maxwell-Smoluchowski model has this property is readily seen:
	$$ P_{(q, \lambda u)}(f\circ \Pi_q) =\alpha_q \mu_q(f\circ \Pi_q) + (1-\alpha_q) f(\Pi_q \text{Ref}_q \lambda u);$$
but $\Pi_q \text{Ref}_q \lambda u=\Pi_q \text{Ref}_q u$, so the left-side of the equation does not depend on $\lambda$.  Moreover, the associated billiard map $\mathcal{B}$ induces a map $\overline{\mathcal{B}}$ on $S^+$ as follows: given a continuous function $f$ on $S^+$,
	$$(\overline{\mathcal{B}} f)(\Pi x):=\mathcal{B}_x(f\circ\Pi).$$
The operator $\overline{\mathcal B}$ thus acts as the Markov transition kernel for the Markov chain $\overline X_n := \Pi(X_n)$.

In \cite{CPSV2009}, uniform ergodicity is established for a stochastic process related to $\overline X_n$, which informally can be described as follows.  A particle moves with constant speed inside the domain $M$.  Upon collision with the boundary, it is reflected in some random direction, not depending on the incoming direction, keeping the magnitude of its velocity constant.  The distribution of the random direction is given by the so-called \emph{Knudsen cosine law}, which is the velocity component of the billiard measure $\sigma$.  The particle then continues on to the next collision point, where it is again reflected in a random direction, independent of the previous collision, and this continues \emph{ad infinitum}.  The sequence of collision points, which forms a Markov chain with state space $\partial M$, is referred to as the \emph{Knudsen random walk}.

It is readily seen that the Knudsen random walk is precisely the process $\xi_n := \pi(\overline X_n)$ when $\alpha(q) \equiv 1$.  The following conditions on $\partial M$ are adapted from \cite{CPSV2009}.

\begin{assumption} \label{Lipschitz condition}
Suppose that $\partial M$ is an $(n-1)$-dimensional, almost everywhere continuously differentiable surface satisfying the following Lipschitz condition.  For each $q \in \partial M$, there exists $\epsilon_q > 0$, an affine isometry $\mathcal I_q : M \to \mathbb R^n$, and a function $f_q : \mathbb R^{n-1} \to \mathbb R$ such that
\begin{itemize}
	\item The function $f_q$ satisfies $f_q(0) = 0$ and the Lipschitz condition.  That is, there exists a constant $L_q > 0$ such that $|f_q(p)-f_q(p')| < L_q|p-p'|$ for all $p,p' \in \partial M$.
	\item The affine isometry satisfies $\mathcal I_q q = 0$ and $$\mathcal I_q(M \cap B(q,\epsilon_q)) = \{z \in \mathcal B(0,\epsilon_q) : z_n > f(z_1,\ldots,z_{n-1})\}.$$
\end{itemize}
\end{assumption}

\begin{theorem}[adapted from \cite{CPSV2009}]\label{Knudsen-ergodicity}  Suppose $\mathrm{diam}(M) < \infty$ and Assumption \ref{Lipschitz condition} holds.  Then the normalized Riemannian volume $\bar A$ on $\partial M$ is the unique stationary distribution for the Knudsen random walk.  Moreover, there exist constants $\beta_0,\beta_1$, independent of the distribution of $\xi_0$, such that $$||P(\xi_n \in \cdot)-\bar A||_v \leq \beta_0e^{-\beta_1 n},$$ where $||\cdot||_v$ is the total variation norm.
\end{theorem}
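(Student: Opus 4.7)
The plan is to establish uniform ergodicity by combining (i) a reversibility argument that identifies $\bar A$ as a stationary distribution for the Knudsen chain $\xi_n$, and (ii) a Doeblin-type minorization condition that yields both uniqueness and geometric convergence. Since $\mathrm{diam}(M) < \infty$ and $\partial M$ has finite $(n-1)$-dimensional volume, the state space $\partial M$ is bounded, and Assumption \ref{Lipschitz condition} supplies the regularity needed to produce uniform lower bounds on transition densities.

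For the invariance of $\bar A$, the key observation is that the joint law of a pair $(\xi_k,\xi_{k+1})$ when $\xi_k \sim \bar A$ coincides, up to the normalizing constant, with the push-forward of the billiard measure $\sigma$ under the map $(q,u)\mapsto (q,\pi(\mathcal{T}(q,u)))$. Because $\sigma$ is invariant under the standard billiard map and the Knudsen cosine law is precisely the velocity marginal of $\sigma$ (a projective object insensitive to the pre-collision velocity), a short change-of-variables computation, together with the fact that specular reflection preserves $\langle u,\mathbbm{n}_q\rangle$, yields the detailed balance identity
\[
d\bar A(q)\, dK(q,q') \;=\; d\bar A(q')\, dK(q',q),
\]
where $K$ denotes the Knudsen transition kernel. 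This gives reversibility of $\xi_n$ with respect to $\bar A$, and hence stationarity; uniqueness then follows from the minorization below.

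For geometric ergodicity, I would establish a Doeblin minorization of the form $K^{n_0}(q,\cdot) \geq \epsilon\, \bar A(\cdot)$ uniformly in $q \in \partial M$, for some $n_0 \geq 1$ and $\epsilon > 0$. The strategy is to produce, for each $q$, an open ball $B_q \subset \mathrm{int}(M)$ of radius bounded below such that the Knudsen emission from $q$ passes through $B_q$ with probability bounded uniformly away from zero; this uses the Lipschitz graph description of $\partial M$ near $q$ and the compactness of $\partial M$ to make the constants uniform. A trajectory passing through $B_q$ next hits the boundary in a region whose hitting density with respect to $\bar A$ is uniformly bounded below on a set of positive $\bar A$-measure. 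Iterating $n_0$ times, with $n_0$ depending only on $\mathrm{diam}(M)$ and the Lipschitz data, floods all of $\partial M$ and produces the claimed minorization. Standard Doeblin theory then yields $\|P(\xi_n \in \cdot)-\bar A\|_v \leq \beta_0 e^{-\beta_1 n}$ with $\beta_1$ determined by $\epsilon$ and $n_0$, and uniqueness of the stationary distribution as a byproduct.

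The main obstacle is securing the uniform lower bound on the probability that a single Knudsen emission from $q$ reaches a fixed-size interior ball. Under only a Lipschitz hypothesis on $\partial M$ the solid angle of inward directions at a boundary point can, a priori, degenerate, and the geodesic translation $\mathcal{T}$ may exhibit discontinuities near non-differentiable points. Handling this cleanly requires exploiting Assumption \ref{Lipschitz condition} to control the cone of admissible emission directions at every $q$ in terms of $L_q$, and then a compactness argument on $\partial M$ to replace the local constants $L_q$ and $\epsilon_q$ by global ones. Once this uniform geometric estimate is in place, the remainder of the argument is a routine application of Markov chain minorization, essentially as carried out in \cite{CPSV2009}.
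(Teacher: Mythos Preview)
The paper does not supply its own proof of this theorem: it is stated as a result adapted from \cite{CPSV2009} and used as a black box in the proof of Theorem~\ref{uniform ergodicity}. Your sketch is a faithful outline of the standard argument (detailed balance for stationarity of $\bar A$, Doeblin minorization via uniform lower bounds on the one-step kernel under the Lipschitz assumption), which is indeed the route taken in \cite{CPSV2009}; there is nothing to compare against in the present paper.
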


In what follows, we extend the result above to show that the Markov chains $\overline X_n$ and $X_n$ are uniformly ergodic.  Moreover, we give explicit expressions for the stationary measures of these chains.  Before stating the theorem, we first establish some notation. Recall that we assume that $\partial M$ is partitioned into $N$ components $\Gamma_i$.  Moreover $T(q) \equiv T_i$ and $\alpha(q) \equiv \alpha_i$ for all $q \in \Gamma_i$.  Let $$p_{ij} = P(\xi_{n+1} \in \Gamma_j \mid \xi_n \in \Gamma_i)$$ be the one-step transition probability of the Knudsen random walk between components of $\partial M$.  Let $A_i = A(\Gamma_i)$ and $\bar A_i = \bar A(\Gamma_i)$ be the volume measure and normalized volume measure of the boundary components $\Gamma_i$ respectively.  Next, note that since the temperature is constant on boundary components, by identifying $N_q^+$ with the upper half space $\mathbb R^n_+ = \{x \in \mathbb R^n : x \cdot e_n > 0\}$, we can define $\mu_i := \mu_q \in \mathcal P(N_q^+)$ for $q \in \Gamma_i$ to be the Maxwellian associated to $\Gamma_i$.  Moreover, it is readily apparent that $\mu_i$ can be disintegrated using polar coordinates as a product measure on $S_q^+\times(0,\infty)$.  One can check that the component on $S_q^+$ is $d\sigma_q(u)=C_q\left\langle u, \mathbbm{n}_q\right\rangle\,  dV_q^{\text{\tiny S}}(u)$, the Knudsen cosine law referred to above.  We denote the component on $(0,\infty)$, the speed component, as $\mu_i^\mathrm s$: $$d\mu_i(u,r) = d\sigma_q(u)d\mu_i^\mathrm s(r).$$  Finally, let $Q$ be the $N \times N$ matrix and $\pi$ the $N$-dimensional vector where $$Q_{ij} = (1-\alpha_i)p_{ij} A_i/A_j, \quad \pi_i = \bar A_i\alpha_i\mu_i^\mathrm s$$

\begin{theorem}\label{uniform ergodicity}
Suppose $\mathrm{diam}(M) < \infty$ and that Assumption \ref{Lipschitz condition} holds.  Then
\begin{enumerate}
	\item The billiard measure $\sigma$ is stationary for $\overline X_n$.
	
	\item There exists a unique stationary distribution $\nu$ for $X_n$.  Moreover, there exist constants $b_0,b_1$, independent of the distribution of $X_0$, such that $$||P(X_n \in \cdot)-\nu||_v \leq b_0e^{-b_1 n},$$ where $||\cdot||_v$ is the total variation norm.  That is, the chain $X_n$ is uniformly ergodic.
	
	\item When the boundary temperatures $T_1,\ldots,T_N$ are equal, say to a constant $T_0 > 0$, the stationary distribution $\nu$ is given by $$d\nu(q,v) = \rho_q(v)dV_q(v)d\bar A(q)$$ where $\rho_q(v)$ is the Maxwellian density given in (\ref{maxwellian}) with constant temperature $T(q) \equiv T_0$.
	
	\item When the boundary temperatures are not equal, the stationary distribution $\nu$, expressed in polar coordinates as a measure on $S^+ \times (0,\infty)$ is given by $d\nu(q,u,r) = d\nu^{\mathrm s}_q(r)d\sigma(q,u)$, where $\nu^\mathrm{s}_q$ is the stationary distribution of the speed after collision with boundary point $q$.  The measure $\nu_q^\mathrm s$ is constant on components $\Gamma_i \subset \partial M$, so we let $\nu_i^\mathrm s := \nu_q^\mathrm s$ for $q \in \Gamma_i$.  Letting $\nu^\mathrm s$ be the $N$-dimensional vector with components $\nu_i^\mathrm s$, we have that $$\nu^\mathrm s = (I - Q)^{-1}\pi.$$
\end{enumerate}
\end{theorem}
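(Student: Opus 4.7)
The plan is to exploit the projectivity of the Maxwell-Smoluchowski operator to split $X_n = (Q_n,V_n)$ into the direction chain $\overline{X}_n := \Pi X_n$ on $S^+$ (with transition kernel $\overline{\mathcal{B}}$) and the radial coordinate $R_n := |V_n|$, and to treat the four parts of the theorem in order. The key preliminary is that at every step the transition kernel admits the decomposition
$$\mathcal{B}_{(q,u)} \,=\, \alpha_{q'}\,\mu_{q'}^+ \,+\, (1-\alpha_{q'})\,\delta_{\mathrm{Ref}_{q'}(\mathcal{T}(q,u))},\qquad q' := \pi(\mathcal{T}(q,u)),$$
so that both a fresh-sample lower bound and a specular-reflection comparison are available simultaneously.

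For Part 1, I would pushforward $\sigma$ through $\overline{\mathcal{B}}$ and change variables along $\mathcal{T}$. The $(1-\alpha_{q'})$-piece becomes the pushforward of $\sigma$ under the standard billiard map $\mathrm{Ref}\circ\mathcal{T}$, which preserves $\sigma$ by the classical Liouville argument (and in particular implies $\pi_*\sigma = \bar A$). The $\alpha_{q'}$-piece is the integral of the Knudsen cosine law $\sigma_{q'}$ against a measure whose marginal on $\partial M$ is $\bar A$, which reassembles to $\alpha_{q'}\,d\sigma$. Summing returns $\sigma$.

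For Part 2, I would first establish uniform ergodicity of $\overline X_n$ on the compact bundle $S^+$ via a Doeblin minorization. The estimate $\overline{\mathcal{B}}_{(q,u)} \geq \alpha_{\min}\,\sigma_{q'}$, with $\alpha_{\min}:=\min_i\alpha_i > 0$, shows that at each step a fresh Knudsen sample occurs with probability at least $\alpha_{\min}$; conditioning on this event at each of the next $m+1$ steps (probability at least $\alpha_{\min}^{m+1}$) reduces the sequence of collision points to the Knudsen random walk, and Theorem~\ref{Knudsen-ergodicity} supplies a minorization for that walk after $m$ steps. Combining these gives $\overline{\mathcal{B}}^{m+1}(x,\cdot) \geq \epsilon\,\phi$ for some fixed probability measure $\phi$ on $S^+$ and $\epsilon > 0$. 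The main technical obstacle is lifting this to the full chain $X_n$ on the noncompact state space $N^+$. This is resolved by observing that the same thermalization event also resets the speed to an independent sample from a Maxwellian $\mu_j^\mathrm{s}$ with all moments finite; running the same conditioning argument on the joint chain therefore produces a Doeblin minorization for $\mathcal{B}^{m+1}$ on $N^+$, and uniform ergodicity of $X_n$ follows.

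For Parts 3 and 4, uniqueness of $\nu$ is guaranteed by Part 2, so I would verify the claimed forms directly. In the equal-temperature case, in polar coordinates one has $\mu^+ = \sigma\times\mu_0^\mathrm{s}$: the direction factor is invariant under $\overline{\mathcal{B}}$ by Part 1, while the speed factor is preserved because free flight does not change speed and every thermalization resamples from the same $\mu_0^\mathrm{s}$. In the general unequal-temperature case, I would take as ansatz the factorization $d\nu(q,u,r) = d\sigma(q,u)\,d\nu^\mathrm{s}_{I(q)}(r)$ (with $I(q)$ the component index of $q$) and derive the balance equation for the speed factor. Tracking the pair $(I_n, R_n)$ in stationarity, $I_n$ marginally follows the Knudsen random walk with stationary weights $\bar A_i$ and transitions $p_{ij}$, and conditional on $I_{n+1}=j$ one draws $R_{n+1}\sim \mu_j^\mathrm{s}$ with probability $\alpha_j$ and sets $R_{n+1}=R_n$ otherwise. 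Writing out the resulting balance and invoking the detailed balance relation $A_ip_{ij} = A_jp_{ji}$ for the Knudsen walk---itself a consequence of the fact that $\mathcal{J} = J\circ\mathcal{T}$ is a $\sigma$-preserving involution of $S^+$ that exchanges consecutive collision points---produces precisely $\nu^\mathrm{s} = \pi + Q\nu^\mathrm{s}$. Invertibility of $I-Q$ follows because $Q$ is similar via the diagonal conjugation $Q = D\tilde{Q}D^{-1}$ with $D = \mathrm{diag}(A_i)$ to the substochastic matrix $\tilde{Q}_{ij} = (1-\alpha_i)p_{ij}$, whose row sums are bounded by $1-\alpha_{\min} < 1$, giving $\nu^\mathrm{s} = (I-Q)^{-1}\pi$.
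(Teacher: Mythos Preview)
Your proposal is correct and its overall architecture matches the paper's: a Doeblin/small-set minorization lifted from the Knudsen walk for Part~2, and the affine balance system $\nu^{\mathrm s}=\pi+Q\nu^{\mathrm s}$ for Part~4. Two points are worth flagging as genuine differences rather than gaps.

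For Parts~1 and~3, the paper argues via reciprocity: it observes that $\Pi_*\mu=\sigma$ and that the reciprocity property of $P$ yields the detailed-balance identity $d\overline\mu(\overline x)\,d\overline{\mathcal B}_{\overline x}(\overline y)=d\overline\mu(\overline y)\,d\overline{\mathcal B}_{\overline y}(\overline x)$, from which $\sigma\overline{\mathcal B}=\sigma$ (and, in the equal-temperature case, $\mu\mathcal B=\mu$) are immediate. Your route---splitting $\overline{\mathcal B}$ into its specular and thermal pieces, invoking Liouville invariance of $\sigma$ under the deterministic billiard map for the first, and reassembling the Knudsen cosine law over the $\bar A$-marginal for the second---is a direct computation that avoids invoking reciprocity. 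Both are valid; the paper's argument is shorter and more conceptual, while yours makes the mechanism explicit and does not require the reader to unwind what reciprocity says at the level of $\overline{\mathcal B}$.

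For Part~2, the paper simply asserts that Theorem~\ref{Knudsen-ergodicity} yields a minorization for $\overline X_n$ and then lifts it to $N^+$ by tensoring with the Maxwellian speed density. Your conditioning argument---forcing a thermalization at each of $m{+}1$ consecutive steps with probability at least $\alpha_{\min}^{m+1}$ to reduce to the pure Knudsen walk---supplies the passage from the $\alpha\equiv1$ statement of Theorem~\ref{Knudsen-ergodicity} to general $\alpha_i\in(0,1]$ that the paper leaves implicit. For Part~4, the paper reaches $(I-Q)^{-1}\pi$ by iterating $\nu_{k+1}=\pi+Q\nu_k$ from an arbitrary $\nu_0$ and asserting convergence; you instead verify the fixed point directly, invoke the Knudsen detailed-balance $A_ip_{ij}=A_jp_{ji}$ explicitly, and prove invertibility of $I-Q$ via the similarity $Q=D\tilde Q D^{-1}$ with $\tilde Q$ strictly substochastic. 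These additions are correct and fill in steps the paper treats as routine.
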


\begin{proof}
Let $\mu$ be the measure derived  from the Maxwellians $\mu_q$ as indicated just prior to the statement of Proposition \ref{entropy_reciprocity}.  Note that $\overline\mu := \Pi_*\mu = \sigma$.  Moreover, it is a consequence of the reciprocity property of $P$ that the following time reversibility condition holds: for all $x,y \in N^+$, $$d\overline{\mu}(\overline{x})d\overline{\mathcal{B}}_{\overline{x}}(\overline{y})=d\overline{\mu}(\overline{y})d\overline{\mathcal{B}}_{\overline{y}}(\overline{x}),$$ where $\overline x := \Pi x$.  It then follows that $\sigma \overline{\mathcal B} = \sigma$ and the first part of the theorem is proved.  Moreover, the same argument can be used to show that the third part of the theorem holds.

For the second part of the theorem, we show that the chain $X_n$ is uniformly ergodic by showing that the state space $N^+$ is small in the following sense: there exists $m \in \mathbb Z_+$ and a nontrivial measure $\phi$ on $N^+$ (which is not necessarily a probability measure) such that for all $x \in N^+$ and measurable sets $A \subseteq N^+$, we have that $\mathcal B^m_x(A) \geq \phi(A)$ (see Theorem 16.0.2 of \cite{MT2009}).  By Theorem \ref{Knudsen-ergodicity}, there exists $m \in \mathbb Z_+$ and a nontrivial measure $\overline\phi$ on $S^+$ so that the aforementioned condition for uniform ergodicity holds for the $\overline X_n$ chain.  We can then construct the measure $\phi$ as follows: $$\phi(A) = \int_{\Pi(A)}\int_{L_x(A)} \rho^{\text{\tiny s}}_q(|v|)\,d|v|\,d\overline\phi(\overline x),$$ where $L_x(A) := \{|v| : \overline x = (q,v/|v|) \in \Pi(A)\}$ and $\rho_q^{\text{\tiny s}}(|v|) = \int_{S_q^+} |v|^{n-1}\rho_q(v) dV_q^{\text{\tiny S}}(v/|v|)$.  Note that $$B_x^m(A) = \int_{\Pi(A)}\int_{L_{x'}(A)} \rho_{q'}^{\text{\tiny s}}(|v'|)d|v'|d\overline{\mathcal B}_{\overline x}^m(\overline{x'}) \geq \int_{\Pi(A)}\int_{L_{x'}(A)} \rho_{q'}^{\text{\tiny s}}(|v'|)d|v'|d\overline\phi(\overline{x'}) = \phi(A).$$

For the final part of the theorem, let $\nu_0$ be the initial distribution of the Markov chain and define $\nu_k = \nu_0 \mathcal B^k$ to be the distribution of $X_k$.  Further, note that the state space $N^+$ can be partitioned into components $N_i^+ = \{x = (q,u) \in N^+ : q \in \Gamma_i\}$ and we let $\nu_{k,i}$ denote the restriction of $\nu_k$ to $N_i^+$.  Note that the stationary measure for $X_n$ must assign probability $\bar A_i$ to component $N_i^+$ since the normalized area measure on the boundary is stationary for $\overline X_n$.  Moreover, the relation $\nu_{k+1} = \nu_k \mathcal B$ amounts to the system $$\nu_{k+1,i} = \bar A_i \alpha_i \mu_i + \sum_{j=1}^N (1-\alpha_i) p_{ij} A_i/A_j \nu_{k,j},$$ for $i = 1,\ldots, N$.  Expressed in matrix form, this yields $\nu_{k+1} = \pi + Q\nu_k$, which implies $$\nu_k = Q^k\nu_0 + \sum_{n=0}^{k-1} Q^n\pi.$$  It is straightforward to check that this converges in total variation to $(I-Q)^{-1} \pi$.

\end{proof}

\subsection{Entropy production rate formula}

With an explicit expression for the stationary measure in hand, we are now ready to give a formula for the entropy production rate.  It is clear that the hypotheses of Theorem \ref{heat entropy} are satisfied by the stationary measure given in Theorem \ref{uniform ergodicity}.  The main work will be in computing $\nu_q^+(E_0)-\nu_q^-(E_0)$, where $E_0$ is the kinetic energy function.

\begin{proposition}\label{entropy formula proposition} The entropy production rate in the multi-temperature Maxwell-Smoluchowski system is given by
\begin{equation}\label{entropy formula multitemp}
e_p = -\sum_{j =1}^N \frac{\nu_j^+(E_0) - \nu_j^-(E_0)}{\kappa T_j} = -\sum_{i,j =1}^N \frac{\nu_j^+(E_0) - \nu_i^+(E_0)}{\kappa T_j} p_{ij}\frac{A_i}{A_j}
\end{equation}
where $\nu^+$ is the stationary measure $\nu$ given in Theorem \ref{uniform ergodicity}, $\nu^- = \mathcal T_* \nu$, and $\nu_j^\pm$ is the restriction of the stationary measure $\nu^\pm$ to component $N_j^\pm = \{(q,u) \in N^\pm : q \in \Gamma_i\}$.  
\end{proposition}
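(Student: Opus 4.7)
The plan is to derive both equalities directly from Theorem \ref{heat entropy} combined with the explicit description of the stationary measure provided by Theorem \ref{uniform ergodicity}. Since the Maxwell-Smoluchowski setting is free of potential forces, Theorem \ref{heat entropy} reduces to
\[ e_p = -\int_{\partial M} \frac{\nu_q^+(E_0) - \nu_q^-(E_0)}{\kappa T(q)}\, d\bar{A}(q). \]
Because the temperature is piecewise constant ($T(q) \equiv T_j$ on $\Gamma_j$) and the boundary partitions into $\Gamma_1, \dots, \Gamma_N$, the factor $1/(\kappa T_j)$ pulls out of each piece. The definition of $\nu_j^{\pm}$ as the restriction of $\nu^{\pm}$ to $N_j^{\pm}$ means $\int_{\Gamma_j} \nu_q^{\pm}(E_0)\, d\bar{A}(q) = \nu_j^{\pm}(E_0)$, which yields the first equality immediately.

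The bulk of the work is the second equality, which amounts to computing $\nu_j^-(E_0)$ in terms of data on the components $\Gamma_i$. Using $\nu^- = \mathcal{T}_* \nu^+$ together with conservation of kinetic energy along geodesic free flight ($E_0 \circ \mathcal{T} = E_0$), I would write
\[ \nu_j^-(E_0) = \int_{\mathcal{T}^{-1}(N_j^-)} E_0 \, d\nu^+ = \sum_i \int_{N_{i,j}^+} E_0 \, d\nu^+, \]
where $N_{i,j}^+ = \{x \in N_i^+ : \mathcal{T}(x) \in N_j^-\}$. The central input is the product structure of $\nu$ from Theorem \ref{uniform ergodicity}: in polar coordinates $d\nu(q,u,r) = d\sigma(q,u)\, d\nu_i^{\mathrm s}(r)$ on $N_i^+$. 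Because the geodesic trajectory from $(q,u)$ is independent of the speed $r$ in the free case, the set $N_{i,j}^+$ depends only on the direction component $(q,u) \in B_{i,j} \subset S^+|_{\Gamma_i}$, and the $\sigma$-mass of $B_{i,j}$ is exactly $\bar{A}_i p_{ij}$ by the definition of the Knudsen transition probability $p_{ij}$. Factoring the integral over direction and speed and using $\nu_i^+(E_0) = \bar{A}_i \nu_i^{\mathrm s}(E_0)$ produces a clean expression for $\nu_j^-(E_0)$ as a weighted sum of the $\nu_i^+(E_0)$.

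To convert $\nu_j^+(E_0) - \nu_j^-(E_0)$ into a double sum of differences $\nu_j^+(E_0) - \nu_i^+(E_0)$ with weights $p_{ij} A_i/A_j$, I would invoke the flux-balance identity $\sum_i A_i p_{ij} = A_j$, which is precisely the statement that $\bar{A}$ is stationary for the Knudsen random walk (Theorem \ref{Knudsen-ergodicity}). Equivalently, $\sum_i p_{ij} (A_i/A_j) = 1$, so I can rewrite $\nu_j^+(E_0) = \sum_i p_{ij}(A_i/A_j)\, \nu_j^+(E_0)$ and subtract the expression for $\nu_j^-(E_0)$ derived above to assemble the claimed sum.

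The main obstacle is careful bookkeeping of normalizations: $\sigma$ is a probability measure on $S^+$ whose restriction to a component has mass $\bar{A}_i$, while $\nu_i^{\mathrm s}$ is a probability on speeds and the restrictions $\nu_i^{\pm}$ are unnormalized. The factors $A_i/A_j$ in the second equality arise precisely from the ratios of these normalizations through the Knudsen flux-balance relation; no new idea beyond Theorems \ref{heat entropy} and \ref{uniform ergodicity} is needed, only the projective/product structure of the Maxwell-Smoluchowski stationary measure to decouple speed and direction when computing the various moments of $E_0$.
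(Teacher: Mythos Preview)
Your proposal is correct and follows essentially the same route as the paper: apply Theorem~\ref{heat entropy} with piecewise constant temperature to obtain the first equality, express $\nu_j^-(E_0)$ as a $p_{ij}$-weighted sum of the $\nu_i^+(E_0)$ via $\nu^-=\mathcal{T}_*\nu^+$ and energy conservation under $\mathcal{T}$, and then invoke the Knudsen flux-balance identity $\sum_i p_{ij}A_i/A_j=1$ to recast the single sum as the stated double sum. The paper simply asserts the relation $\nu_j^-(E_0)=\sum_i p_{ij}(A_i/A_j)\,\nu_i^+(E_0)$, whereas you supply the justification through the polar product structure $d\nu=d\sigma\,d\nu_i^{\mathrm s}$ and the speed-independence of the billiard trajectory; this is the only additional detail in your argument, and it is exactly the bookkeeping the paper leaves implicit.
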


The factor $\nu_j^+(E_0)-\nu_i^+(E_0)$, which quantifies the stationary state energy exchange between wall $j$ and wall $i$, can be expressed more explicitly in terms the temperature gradient between wall $j$ and wall $i$, but we leave this to some explicit examples of tables below.

\begin{proof}
Let $E_0(q,u) = \frac 12 m|u|_q^2$.  Observe that
\begin{align*}
\int_{\partial M} \frac{\nu_q^\pm(E_0)}{\kappa T(q)}\,d\bar A(q) 
&= \sum_{j = 1}^N \int_{\Gamma_j} \frac 1{\kappa T(q)}\left(\int_{N_q^+} E_0(q,u)\, d\nu_q^\pm(u) \right)\, d\bar A(q) \\
&= \sum_{j = 1}^N \frac{1}{\kappa T_j} \int_{\Gamma_j}\int_{N_q^+} E_0(q,u) d\nu^\pm(q,u) \\
&= \sum_{j = 1}^N \frac{\nu_j^\pm(E_0)}{\kappa T_j} 
\end{align*}
Next, note that $\nu_j^-(E_0) = \sum_{i = 1}^N p_{ij} A_i/A_j \nu_i^+(E_0)$.  Using the entropy production rate formula in Theorem \ref{heat entropy} this gives
\begin{align*}
e_p &= -\sum_{j = 1}^N \frac{\nu_j^+(E_0) - \nu_j^-(E_0)}{\kappa T_j} \\
&= -\sum_{j = 1}^N \frac{\nu_j^+(E_0) - \sum_{i = 1}^Np_{ij}A_i/A_j \nu_i^+(E_0)}{\kappa T_j} \\
&= -\sum_{i,j =1}^N \frac{\nu_j^+(E_0) - \nu_i^+(E_0)}{\kappa T_j} p_{ij}\frac{A_i}{A_j},
\end{align*}
where the final equality follows because $\sum_{i = 1}^Np_{ij}A_i/A_j = P(\xi_n \in \Gamma_i \mid \xi_{n+1} \in \Gamma_j) = 1$ for each $j = 1,\ldots, N$, when $\xi_n$ is the stationary Knudsen random walk. 
\end{proof}

\subsection{Examples}\label{subsec:Examples}

\subsubsection{The two-plates system}

     We illustrate the formula giving  the  entropy production rate for  the elementary system indicated in Figure \ref{two_plates}. It consists of a particle that bounces back and forth between two parallel plates kept at temperatures $T_1$ and $T_2$. For the reflection operator we adopt the Maxwell-Smoluchowski model with parameters $\alpha_1$ and $\alpha_2$. Thus at any point $q$ of plate $i$, the post-collision velocity of the colliding particle
     has the Maxwell-Boltzmann distribution with temperature $T_i$ with probability  $\alpha_i$, and with probability $1-\alpha_i$ the particle reflects specularly.

      The manifold is taken to be the product of a flat torus and an interval, $M=\mathbb{T}^2\times [0,l]$, where $l$ is the distance between the two (torus) plates, which comprise the two connected components of the boundary $\partial M=\mathbb{T}^2\times \{0,l\}$. Let the index $1$ be associated with the left-side plate and $2$ with right-side plate.  The phase space $N$ is the union of components $N^\pm_i$, where $i\in \{1,2\}$.  Each $N^\pm_i$ is identified with 
     $\mathbb{R}^3_+=\{u\in \mathbb{R}^3: u\cdot \mathbbm{n}>0\}$, where $\mathbbm{n}$ is the normal vector to plate $1$.
      We indicate by $\bar{i}$ the index opposite to $i$, so $\bar{1}=2$ and $\bar{2}=1$. Then the translation map assumes the form $\mathcal{T}(i,u)=(-\bar{i},u)$.  We write  $f_i(u)=f(i,u)$ where
     $u\in \mathbb{R}^3_+$

The billiard map applied to a function $f$ on $N$ has the form
$$ (\mathcal{B}f)(i,u)=P_{-\bar{i}, u}(f) = \alpha_{\bar{i}}\mu_{\bar{i}}(f) + (1-\alpha_{\bar{i}}) f_{\bar{i}}(u).$$
Because the billiard particle alternates between the two plates, a stationary measure for $\mathcal{B}$ must assign equal probabilities 
for each plate: $\nu(N_i)=1/2$. We can then restrict attention to the two-step chain describing the sequence of returns to a plate. 
Let  $\nu_0$ be an initial distribution for the Markov chain and define $\nu_k=\mathcal{B}^k\nu_0$. Let $\nu_{k, i}$ be the restriction of these measures to plate $i$.  Then the equation $\nu_{k+1}=\nu_k \mathcal{B}$ amounts to the system
\begin{align*}
\nu_{k+1,1}&= \frac12 \alpha_1\mu_1 + (1-\alpha_1)\nu_{k,2}\\
\nu_{k+1,2}&=\frac12 \alpha_2 \mu_2 +(1-\alpha_2)\nu_{k,1}
\end{align*}

\begin{figure}
\begin{center}
\includegraphics[width=0.4\textwidth]{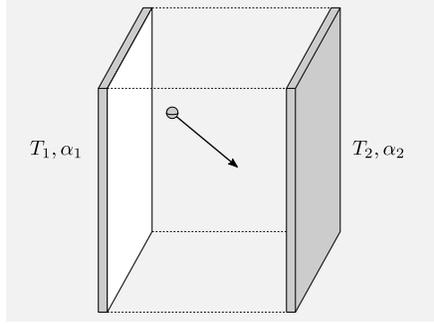}\ \ 
\caption{{\small A particle of mass $m$ bounces back and forth between two plates kept at temperatures $T_1$ and $T_2$. For the reflection operator we use the Maxwell-Smoluchowski model with probabilities of diffuse reflection $\alpha_1$ and $\alpha_2$. }}
\label{two_plates}
\end{center}
\end{figure}

 Writing $$\nu_k=\left(\begin{array}{c}\nu_{k,1} \\\nu_{k,2}\end{array}\right),\ \ Q=\left(\begin{array}{cc}0 & 1-\alpha_1 \\1-\alpha_2 & 0\end{array}\right),\ \ \pi=\frac12\left(\begin{array}{c}\alpha_1\mu_1 \\\alpha_2\mu_2\end{array}\right),$$ equation $\nu_{k+1}=\nu_k \mathcal{B}$ becomes $\nu_{k+1} = \nu_kQ + \pi$, from which we easily obtain
 $$ \nu_{2k} =\gamma^k \nu_0 + \frac{1-\gamma^k}{1-\gamma\ } (I+Q)\pi$$
 where $\gamma=(1-\alpha_1)(1-\alpha_2)$. 
  We conclude that the process converges to a stationary process having stationary distribution $\nu=\nu_1+\nu_2$ where
$$
\nu_1 =\frac{\alpha_1}{2c}\mu_1 + \frac{\alpha_2(1-\alpha_1)}{2c}\mu_2, \ \  
\nu_2= \frac{\alpha_1(1-\alpha_2)}{2c}\mu_1 + \frac{\alpha_2}{2c}\mu_2
$$
  where $c=1-\gamma=1-(1-\alpha_1)(1-\alpha_2)$.  Recall that $\nu^-=\mathcal{T}_*\nu$. Thus $\nu_i^-=\nu_{\bar{i}}$:
  $$\nu^-_1 =\frac{\alpha_1(1-\alpha_2)}{2c}\mu_1 + \frac{\alpha_2}{2c}\mu_2, \ \ 
\nu^-_2=  \frac{\alpha_1}{2c}\mu_1 + \frac{\alpha_2(1-\alpha_1)}{2c}\mu_2.
$$

The stationary measure clearly satisfies the conditions of Theorem \ref{heat entropy}.
Moreover
 $$\nu_1^+-\nu_1^- = -(\nu_2^+ - \nu_2^-) = \frac{\alpha_1\alpha_2}{2c}(\mu_1-\mu_2). $$
 A simple computation gives $\mu_i(E_0)=kT_i$.   The entropy formula given in Theorem \ref{heat entropy} then yields the result:
 $$ e_p = -\frac{\alpha_1\alpha_2}{2\left[1-(1-\alpha_1)(1-\alpha_2)\right]} (T_1- T_2)\left(\frac1{T_1}-\frac1{T_2}\right).$$
 This expression is clearly non-negative. 
 Let us denote by $Q_i$ the expected change in energy of the billiard particle at a collision with plate $i$ in the stationary regime.
  Then $$Q_i= \nu_i^+(E_0) - \nu_i^-(E_0)= -(-1)^i\frac{\alpha_1\alpha_2}{2c}\left(\kappa T_1-\kappa T_2\right). $$     Thus if $T_1>T_2$,
  the particle takes, on average, an amount $Q:=\frac{\alpha_1\alpha_2}{2c}\left(\kappa T_1-\kappa T_2\right)$ of energy from plate $1$  and 
  transfers it to plate $2$ per collision.     Combining with the expression for $e_p$ results in
  $$e_p = \frac{Q}{\kappa T_2} - \frac{Q}{\kappa T_1}. $$  
 It should be noticed     that this expression is independent of the nature of the thermostat model assumed for the plates. On the other hand,
 the above expression showing  that  $Q$ is proportional to the temperature difference depends on the choice  of  model. In fact, the coefficient $\frac{\alpha_1\alpha_2}{2\left[1-(1-\alpha_1)(1-\alpha_2)\right]}$ tells how fast the system transfers energy from hot to cold plate, and may thus be interpreted as a heat conductivity parameter  (measured per pair of collisions rather than time between collisions; the latter  may also be calculated from the stationary measure).  
 
\subsubsection{A three-temperature system}  

Next, we wish to express the entropy production rate for systems with more than two temperatures.  As a prototypical example we take $\partial M$ to be an equilateral triangle where boundary component $\Gamma_i \subset \partial M$ is equipped with parameters $T_i$ and $\alpha_i$ for $i = 1, 2, 3$.  Note that the calculations and formulas to be shown can be generalized to more general polygons, but we restrict our attention to the equilateral triangle for the sake of simplicity.

Following the notation in Subsection \ref{uniform ergodicity subsection} and Equation (\ref{entropy formula multitemp}), the side lengths $A_i$ are equal for all $i$ and the normalized side length $\bar A_i = 1/3$.  Moreover, by symmetry $p_{ij} = (1-\delta_{ij})/2$ for $i,j = 1,2, 3$, where $\delta_{ij}$ denotes the Kronecker delta.  Using Theorem \ref{uniform ergodicity}, $$\nu_j^+ = \sum_{k = 1}^3 c_{jk}\mu_k^+, \quad \mbox{where $c_{jk} = \alpha_j(I-Q)^{-1}_{jk}/3$,}$$ and $Q_{jk} = (1-\alpha_j)(1-\delta_{jk})/2$.  As in the two-plates example, we let $Q_i$ be the expected change in energy of the billiard particle at a collision with boundary component $i$ in the stationary regime.  Moreover, let $\overline Q_{ij}$ denote the average amount of energy from component $i$ which is transferred to component $j$ per collision.

A tedious but straightforward calculation gives
\begin{align*}
Q_i = \nu_i^+(E_0) - \nu_i^-(E_0) &= \frac{\alpha_i}{6}\left(1 - \frac{\alpha_{i'}}{2}\left(1-\alpha_{i''}\right)\right)\mu_{i}(E_{0}) - \frac{\alpha_{i}\alpha_{i'}}{4}\left(1-\frac{\alpha_{i''}}{3}\right)\mu_{i'}(E_0) \\
&\quad \quad \quad \quad + \frac{\alpha_i}{6}\left(1 - \frac{\alpha_{i''}}{2}\left(1-\alpha_{i'}\right)\right)\mu_{i}(E_{0}) - \frac{\alpha_{i}\alpha_{i''}}{4}\left(1-\frac{\alpha_{i'}}{3}\right)\mu_{i''}(E_0) \\
&= \overline Q_{ii'} + \overline Q_{ii''}
\end{align*}
 for $i = 1,2,3$, where $i' = i+1\mod 3$ and $i'' = i+2\mod 3$.  Note that another simple computation yields that $\mu_i(E_0) = \frac{3}{2^{3/2}}\kappa T_i$ for a two-dimensional billiard domain.  Using the entropy formula, we have that $$e_p = -\sum_{i = 1}^3 \frac{Q_i}{\kappa T_i}.$$
 
We have expressed $Q_i$ as the sum of two differences $\overline Q_{ij}$ in order to emphasize that the expected change in energy involves a transfer of energy among pairs of boundary components.  While the formulas are a bit more complicated in the case of multiple temperatures, they nevertheless capture the general qualitative property that the entropy production scales with the square of the temperature difference.  On the other hand, the coefficients on the terms $\mu_i(E_0)$ are specific to the model and express how fast the system transfers energy among boundary components.
 
\subsubsection{A circular chamber system}

We conclude this section with a numerical example to demonstrate how geometric features of the billiard  table, as opposed to features of the collision model, can influence the entropy production rate.

\begin{figure}
\begin{center}
\includegraphics[width=0.55\textwidth]{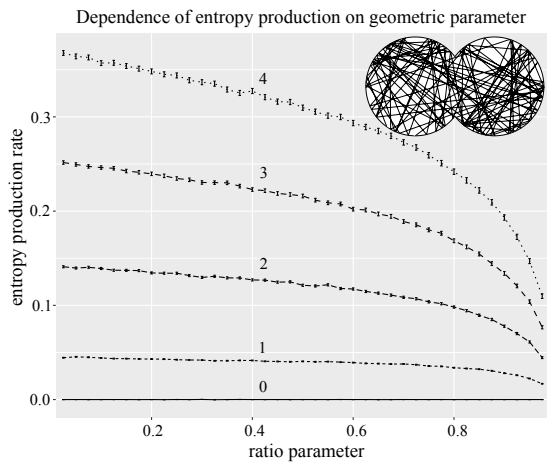}\ \ 
\caption{{\small Entropy production for a billiard system whose billiard domain is formed by the union of two discs of equal radius $r$ whose centers are $a$ units apart.  
The ratio parameter is $a/2r$ and the number above each graph is the temperature difference $T_2-T_1$. The vertical bars indicate 95\% confidence intervals.}}
\label{dependence-on-ratio}
\end{center}
\end{figure} 

The inset in Figure \ref{dependence-on-ratio} shows the billiard table of interest. It  consists of two overlapping discs of radius $r$ with  centers at  a distance $a$ apart. We call $a/2r$ the {\em ratio parameter}.
The  boundary of the table   is the union of two symmetric arcs of circles kept at  constant temperatures  $T_1$ and $T_2$  and equipped with 
 the Maxwell-Smoluchowski collision model.
Note that when $a/2r = 0$, the two discs coincide and the boundary segments    are the right and left hemispheres; and when $a/2r = 1$, the discs are in tangential contact. 
 We are interested in the  changes in  entropy production rate $e_p$ due to varying
 the ratio parameter over the interval $(0,1)$. 
 
 Using the formula for the stationary measure in Theorem \ref{uniform ergodicity} and the formula for entropy production in Proposition \ref{entropy formula proposition}, the rate $e_p$ is determined by the transition probabilities $p_{ij}$ between boundary segments of $\partial M$.  These probabilities  are then estimated  
through numerical simulation of the billiard dynamic by doing a large sampling from the phase space of initial conditions and keeping a tally of one-step transitions between boundary segments. That is, $p_{ij}$ is estimated to be the proportion of trajectories from the sample that start at boundary segment $i$, i.e. circular arc $i$, and then have a first collision somewhere along boundary segment $j$, where $i, j = 1, 2$.

The five   graphs of Figure 
\ref{dependence-on-ratio} correspond to  
  $T_2-T_1 = 0,\ldots,4$. (These  values   are  indicated on top of each graph.)
    Each curve in the graph gives values of $e_p$ for 40 values of the ratio parameter.  
An interesting feature exhibited by the graphs in  Figure \ref{dependence-on-ratio}   is the sharp transition in the rate of decay of $e_p$
past a value of $a/2r$ roughly between $0.8$ and $1.0$ across different values of $T_2-T_1$.

 \section{Work production}\label{sec:Efficiency and work production}
 We view this paper's focus on  entropy production  and the second law      as only  a first step in a broader investigation of the stochastic thermodynamics for random billiards. It is natural to ask whether these systems can be used to explore a wider range of classical thermodynamic phenomena. In this final section we wish to show  how the issue of work production can very naturally be brought into the general picture by exploring, mainly numerically, a   random billiard model of a heat engine.

\begin{figure}
\begin{center}
\includegraphics[width=0.3\textwidth]{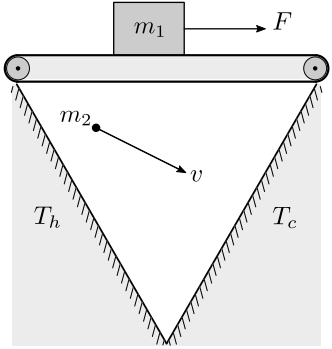}\ \ 
\caption{{\small A simple random billiard heat engine.  The difference in side wall temperatures allows the system to do work against an external force.}}
\label{moving_belt}
\end{center}
\end{figure} 
 
 The  system we use to illustrate this point is shown in Figure \ref{moving_belt}. It is only one extremely simple example of  a general class of models that we call {\em thermophoretic  engines}, to be considered more systematically in a future study. For now,  this  example   will serve  to show  the possibilities offered by random billiard models in stochastic thermodynamics beyond the more restricted purview of the present paper.
 
 In Figure \ref{moving_belt},  a point mass $m_2$ can move freely inside an equilateral triangular domain whose sides are of two types: two of them can thermally interact   with the particle through
  the Maxwell-Smoluchowski thermostat model with  temperatures $T_h$ and $T_c$; the  third side at the top of the triangle can slide without friction as a kind of conveyor belt. This side has its own mass, $m_1$. A constant force $F$ is applied to $m_1$. Among the many possible choices for the model of interaction between the two masses we choose here the simplest, and not necessarily the most physically natural, that allows for an exchange of momentum:  when $m_2$ collides with the top side of the triangle the perpendicular component of its velocity changes sign and the horizontal component, together with the velocity of the belt,  change as
  in the two-masses example of Figure \ref{twomasses}. That is, the interaction is as
   if $m_1$ and $m_2$ undergo a frontal elastic collision in dimension $1$. 
 
 We first describe   how this system fits the definition of a random billiard.  Let   $L$ denote the length of the conveyor belt  and let  $\mathcal{S}$ be the interior of the triangle. The configuration manifold is then the $3$-dimensional space $M=\mathcal{S}\times  \mathbb{R}/ L\mathbb{Z}$. As in the two-masses system of Section \ref{sec:Second Law of Thermodynamics}, it is convenient to rescale the position coordinates of $L$ and $\mathcal{S}$ so as to make the total kinetic energy   proportional to the square Euclidean norm of the combined velocity vector of the two masses. In other words, if $(x,y)$ are the coordinates on the plane of $\mathcal{S}$ and $z$ parametrizes the position of the conveyor belt, we set
 $$x_1= \sqrt{\frac{m_1}{m}}z, \ \ x_2=  \sqrt{\frac{m_2}{m}}x, \ \ x_3=  \sqrt{\frac{m_2}{m}}y, $$
 where $m=m_1+m_2$.  Let us also denote $\gamma:=\sqrt{m_2/m_1}$.
 We  introduce a positive orthonormal frame on the boundary of $M$ denoted $(e_1, e_2, e_3)$ where $e_1$ points in the direction of the axis  $x_1$, $e_2$ is tangent to the boundary of $M$, and $e_3$ is perpendicular to the boundary, pointing to the interior of $M$. 
 Let $v$ and $V$ denote   velocity vectors before and after a collision, respectively, expressed in the coordinate system given by $x=(x_1, x_2, x_3)$.
 Collisions with the stationary sides of the triangle are assumed to satisfy the reciprocity condition such that at each $x\in \partial M$ the subspace of thermal directions, in the sense of Definition \ref{thermal direction},
is perpendicular to $e_1(x)$. Collisions with the sliding top side  are deterministic. Conservation of energy and momentum implies that $V=Cv$ where the collision map $C\in SO(3)$ is the orthogonal involution given by
$$C=\left(\begin{array}{ccr}\frac{1-\gamma^2}{1+\gamma^2} &\ \, \frac{2\gamma}{1+\gamma^2} & 0 \\[6pt]    \frac{2\gamma}{1+\gamma^2} & -\frac{1-\gamma^2}{1+\gamma^2} & 0 \\[6pt]0 &\  0 & -1\end{array}\right). $$
Note that the restriction of $C$ to $T_x\partial M$ is a reflection. The orthogonal component  of $v$ in the subspace spanned by $\gamma e_1 - e_2$  and $e_3$ changes sign, whereas 
the component parallel to $e_1+\gamma e_2$ remains unchanged.

\begin{figure}
\begin{center}
\includegraphics[width=0.5\textwidth]{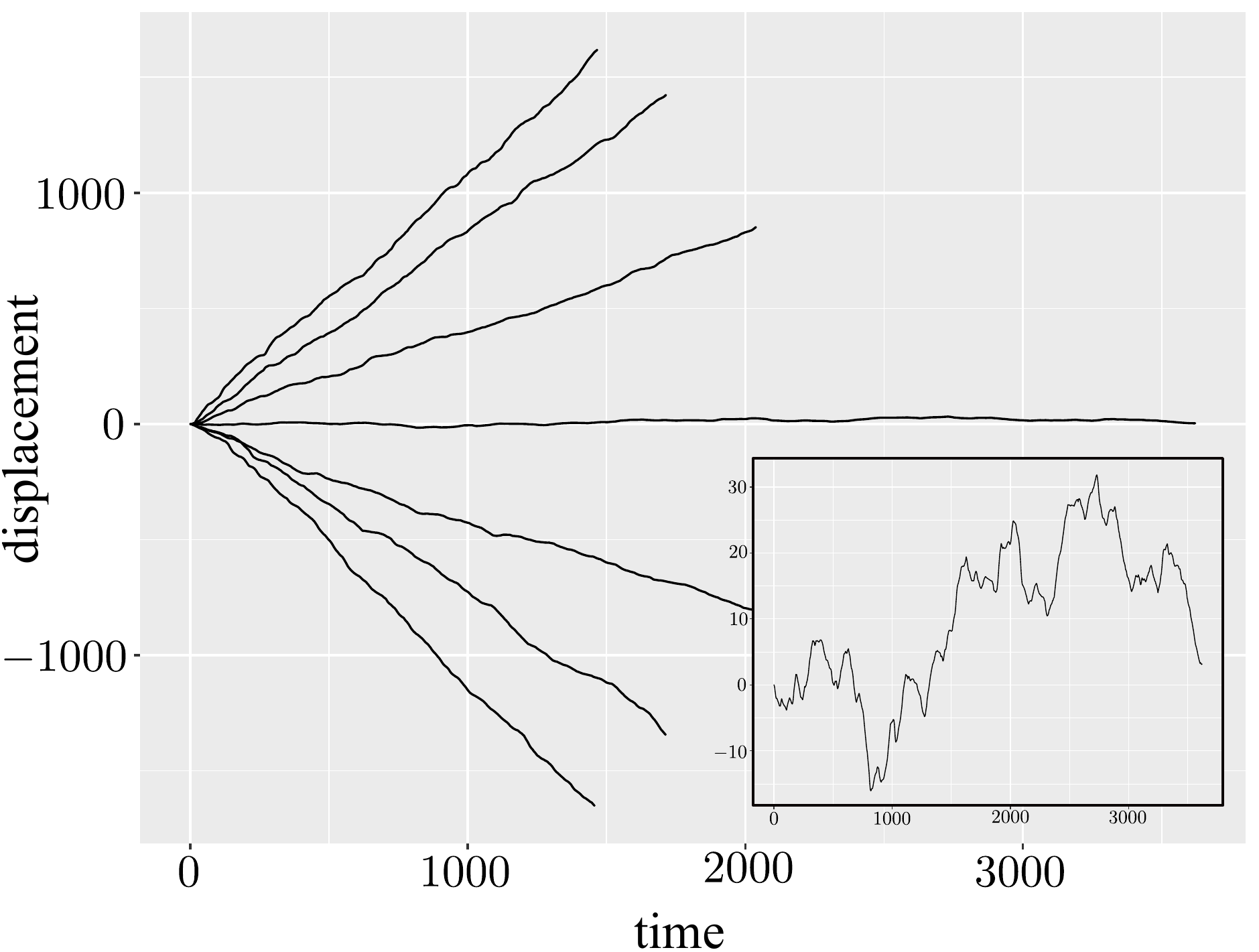}\ \ 
\caption{{\small When $F=0$ the conveyor belt steadily drifts counterclockwise when $T_h-T_c>0$, and clockwise when the temperatures are reversed.
The temperatures for the top  $4$ graphs are $T_c=1$ and $T_h=1, 10, 25, 50$, and $T_c, T_h$ are reversed for the $3$ lower graphs. The inset is the same as the graph for $T_c=T_h=1$ but in a finer scale so that its stochastic character  is more clearly apparent.  }}
\label{translation-vs-time}
\end{center}
\end{figure} 

 Observe how  the particle of mass $m_2$ imparts momentum to the sliding side of the triangle
  as it mediates the energy transfer between the two thermal sides.  In the absence of the force $F$, we should expect the conveyor  belt to move with a steady drift conterclockwise if $T_h>T_c$, and move in the opposite direction when the temperatures are reversed. We should also expect, for some range of values of $F$ and $T_h-T_c$, to observe work being produced against the force. This is in fact what is obtained by the numerical  simulation. Figure \ref{translation-vs-time} shows the motion of the conveyor belt as a function of time for different values of $T_h-T_c$.

  As expected, the stochastic motion of the conveyor belt exhibits a steady drift compatible with an energy flow from the hot side to cold. The middle graph among the seven described in
  Figure \ref{translation-vs-time}  corresponds to $T_h=T_c$ and it is also shown in a different scale in the inset figure.

 If we now impose a constant external force $F$ on $m_1$, a fraction  of the energy flow between the thermal sides is converted into work against $F$. 
 More precisely, 
   at each time $t > 0$, let $Q_h(t)$ be the total amount of energy transferred to the system since $t = 0$ due to collisions between the particle and the hot wall.  Let $Q_c(t)$ be the energy transferred due to collisions with the cold wall. Let $x_w(t)$ be the signed displacement  from the initial position $x_w(0)$ of the mass $m_1$ at time $t$ and let $W(t) = (x_w(t) - x_w(0))F$ be the work done by the system.  The Carnot mean efficiency of the engine up to time   $t$   is then   $$\epsilon_t(T_h, T_c) = -\frac{W(t)}{Q_h(t)}.$$  
Let us now take  into account, in standard fashion, energy conservation.
   The internal energy of the system at time $t$ is $E(t) = E_w(t) + E_p(t)$, which consists  of the kinetic energy $E_w(t)$ of mass $m_1$ and the kinetic energy $E_p(t)$ of the particle $m_2$.  Then conservation of energy, or the First Law of Thermodynamics, gives  at each $t$ $$E(t) - E(0) = Q_h(t) + Q_c(t) + W(t).$$

\begin{figure}
\begin{center}
\includegraphics[width=0.45\textwidth]{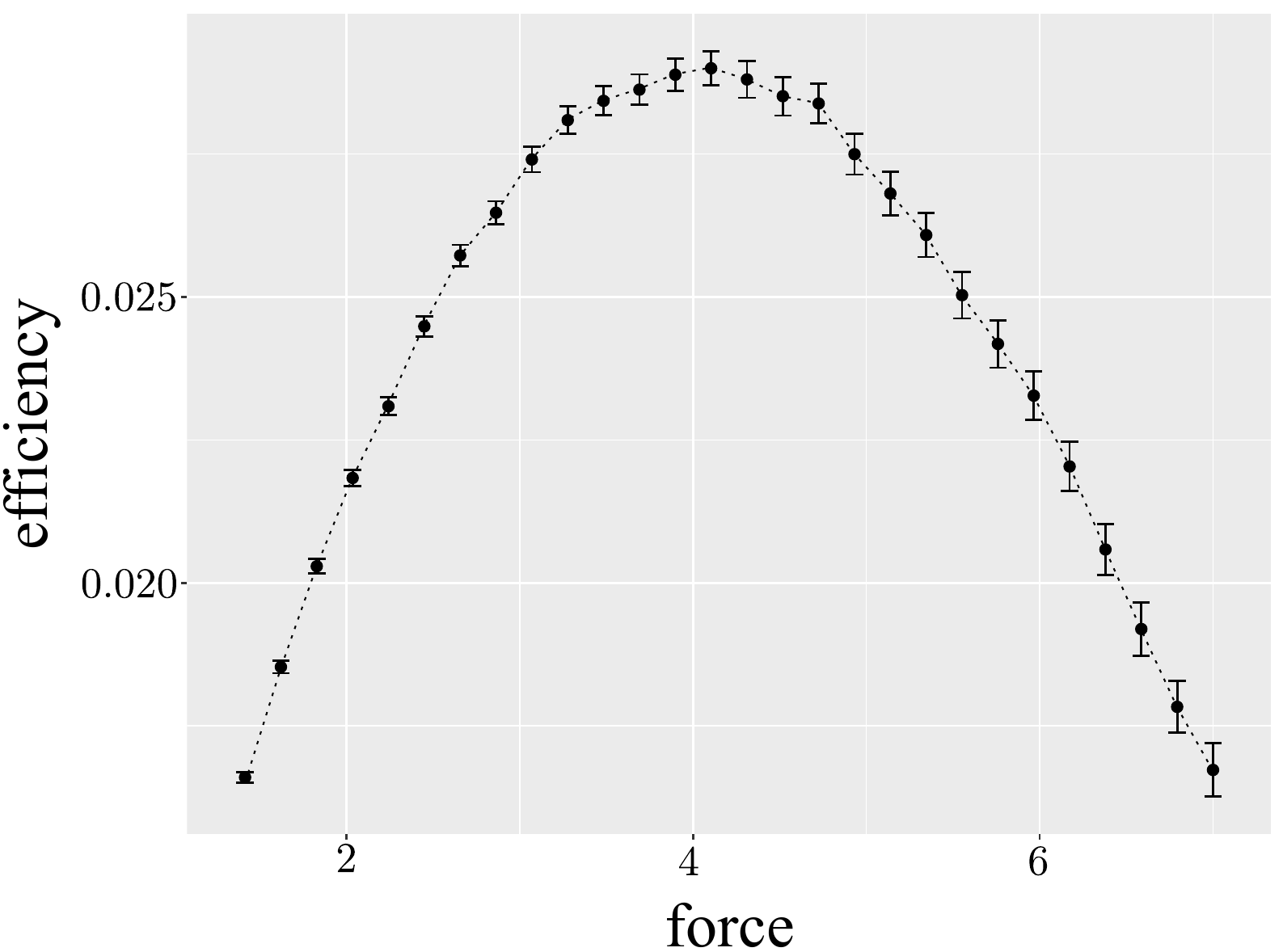}\ \ 
\caption{{\small Efficiency of the billiard heat engine
as a function of the force acting on the sliding
wall. }}
\label{efficiency-vs-force}
\end{center}
\end{figure} 

As $Q_h(t)$ is the accumulated   energy that flows through the hot side up to time $t$, and since $E(t)$ can be expected to have a finite mean value over time, 
it should be the case (and is seen experimentally) that $(E(t) - E(0))/Q_h(t) \to 0$
for large values of $t$.   This then yields the steady state expression of efficiency  given by $$\overline{\epsilon}_t(T_h, T_c) = 1 + Q_c(t)/Q_h(t).$$  Figure \ref{efficiency-vs-force} shows the characteristic curve of mean efficiency   as a function  of  $F$.
The simulation evaluates, for each value of the force,  the efficiency over $5000$ sample runs, with each run of length corresponding to  $1000$ collision events.  The parameters are $T_h = 50, T_c = 1,  m_1 = 1000, m_2 = 1.$  The vertical bars indicate $95\%$ confidence intervals.

\bibliographystyle{abbrv}
\bibliography{Entropy_production}

\end{document}